\documentclass[11pt]{article}
\usepackage{probsys}

\DeclareMathOperator{\supp}{supp}
\DeclareMathOperator{\dom}{dom}

\DeclarePairedDelimiter{\modZ}{\langle}{\rangle}

\newcommand{\N}{\mathbb{N}}
\newcommand{\R}{\mathbb{R}}
\newcommand{\Q}{\mathbb{Q}}
\newcommand{\Z}{\mathbb{Z}}

\newcommand{\eps}{\varepsilon}

\newcommand{\defeq}{\coloneqq}
\newcommand{\eqdef}{\eqqcolon}

\newcommand{\DAlphabet}{\mathbb{D}}

\newcommand{\Flip}{U}               
\newcommand{\AState}{\mathbb{S}}

\newcommand{\Prob}{\mathbb{P}}

\newcommand{\xinsupp}{\mathclap{x\in \supp(p)}}

\newcommand{\setversion}[1]{
  \ifstrequal{#1}{public}{\includecomment{public}\excludecomment{private}}{
  \ifstrequal{#1}{private}{\excludecomment{public}\includecomment{private}}{
  \PackageError{main}{Unknown version `#1'}{Use `private' or `public'.}}}}

\title{Space--Entropy Lower Bounds for Random Sampling}

\begin{document}

\author{
   Thomas L.~Draper \\
   \href{mailto:tdraper@cmu.edu}{tdraper@cmu.edu}
   \and
   Feras A.~Saad \\
   \href{mailto:fsaad@cmu.edu}{fsaad@cmu.edu}}

\date{Carnegie Mellon University \\[10pt] \today}

\maketitle

\pagenumbering{arabic}
\setcounter{page}{0}
\thispagestyle{empty}

\begin{abstract}
We prove fundamental space lower bounds for exact random sampling using an
entropy source of i.i.d.~uniform bits.
A classic result from information theory shows that generating $n$ discrete random
variables $X_1, \dots, X_n$ requires at least $H(X_1, \dots, X_n)$ input
random bits on average, where $H$ is the \citeauthor{shannon1948} entropy
function.
How much space must a random sampling algorithm use in order
to approach this information-theoretically optimal entropy bound?

We prove that any random sampling algorithm that is exact for arbitrary discrete
target distributions and consumes at most $H(X_1,\ldots,X_n)+\eps n+o(n)$
input bits in expectation for every output process must use
$\Omega(\log(1/\eps))$ bits of space.
In fact, i.i.d.~sampling
from the single distribution $\mathrm{Bernoulli}(1/3)$ already forces
at least $(1/{5.116201}-o(1))\log(1/\eps)$ bits of space.
If the sampler handles a family of infinitely many Bernoulli distributions,
we show a sharper bound of at least $\log(1/\eps)$ bits of space.
We also prove lower bounds for general i.i.d.~sampling: for almost every
distribution on $k$ outcomes, the space is at least
$(1/(k+1)-o(1))\log(1/\eps)$ bits.

The proof technique is based on a graph-theoretic analysis of the amount
of information that any algorithm can store in its state.
Finite state spaces force short cycles around the state-transition graph,
and the loss
around such cycles reduces to Diophantine lower bounds on fractional parts
of integer combinations of log-probabilities.
To the best of our knowledge, these results comprise the first known
space lower bounds for entropy-efficient random sampling.
\end{abstract}

\setcounter{tocdepth}{1}
\tableofcontents

\clearpage

\section{Introduction}
\label{sec:introduction}

We are concerned with the problem of generating a sequence
$X_1, X_2, \dots$ of discrete random variables using an entropy source that emits
independent and identically distributed random bits $\Flip_1, \Flip_2, \dots$.
This problem is known in the literature as \textit{random variate generation}~\citep{devroye1986}
or \textit{random sampling}, and is a central object of study
in theoretical and applied computer science.
In \textit{online} random sampling, the target distribution for each $X_i$
is dynamically presented at the $i$th iteration of sampling, as shown in
the following interaction pattern.

\begin{nicebox}[boxrule=0pt,leftrule=3pt, left skip=\parindent, left=0pt, boxsep=0pt, colframe=black]
\begin{algorithmic}
\State \textbf{Input:} Sequence $\Flip_1, \Flip_2, \dots \overset{\rm iid}{\sim} \mathrm{Uniform}\set{0,1}$
\State For $i=1,2,\dots$
\State \quad Receive next target distribution $p_i$
\State \quad \textbf{Emit} next random variable $X_i \sim p_i$
\end{algorithmic}
\end{nicebox}

Online random sampling arises whenever random variables must be generated
sequentially from distributions that are revealed or updated during
execution, as each $p_i$ may itself depend on the previously generated
variables or auxiliary sources of randomness.
Examples include
autoregressive sequence generation in large language models,
   where the next-token distribution depends on the previously sampled prefix;
particle filtering, where resampling distributions are updated
   after each observation;
bandit and reinforcement-learning algorithms, where
   action distributions depend on the history of actions and rewards; and
discrete-event simulations, where transition probabilities depend on the
current system state.
In all these settings, the random sampler receives target distributions
$p_1, p_2, \dots$ that are presented adaptively over time.

\paragraph{Complexity}
The complexity of an online random sampling algorithm is measured according
to three resources that scale with the iteration $n$: its space usage
$S(n)$, runtime $R(n)$, and entropy cost $T(n)$, which is the average
number of consumed random input bits from the entropy source.
A classic result from \citet{knuth1976} gives an ``entropy-optimal'' sampler
whose entropy cost is tightly concentrated around the \citeauthor{shannon1948} entropy:
\begin{align}
H(X_1, \dots, X_n) \le T(n) < H(X_1, \dots, X_n) + 2,
\end{align}
where $H(Z) \defeq -\sum_{z}\Prob(Z=z) \log(\Prob(Z=z))$ for any discrete random
element $Z$, and all logarithms are taken with base two.
Algorithms such as DDG sampling~\citep{knuth1976} and the
interval algorithm~\citep{han1997} achieve an asymptotically optimal
entropy cost of $T(n) \leq H(X_1, \dots, X_n) + O(1)$, but require unbounded
space $S(n) \to \infty$ as $n \to \infty$.

\paragraph{Main Result on Space--Entropy Lower Bounds}
In recent work, \citet{draper2026ieee} develop a sampling algorithm for
the case that each $p_i$ is a rational distribution with bounded
denominator at most $d$. Its entropy cost satisfies
\begin{align}
T(n) \le H(X_1, \dots, X_n) + \eps n + \log(d/\eps),
\end{align}
using \textit{constant} space $S(n) = O(\log(d/\eps))$.
They also offer a conjecture for the general case.

\begin{nicebox}
\begin{conjecture}[name={\citep{draper2026soda}}]
\label{conj:entropy-cost-tight}
Any online random sampling algorithm that generates exact samples from a sequence of
arbitrary discrete distributions,
within $\varepsilon > 0$ of the information-theoretically
optimal entropy rate,
using a stream of i.i.d.~random bits as the entropy source,
requires $\Omega(\log(1/\varepsilon))$ bits of space
for an auxiliary state that is carried over between rounds.
\end{conjecture}
\end{nicebox}

In this article, we formally prove this conjecture, establishing the first
known space lower bound on entropy-efficient online random sampling.

\begin{nicebox}
\begin{theorem}[store=theorem:space-lb,restate-keys={note=Restated}]
\label{theorem:space-lb}
Any online random sampler that can generate exact samples
from a sequence of arbitrary discrete probability distributions
using an input sequence of i.i.d.~random bits requires $\log(\abs{\AState}) \geq \Omega(\log(1/\eps))$ bits
of persistent space to achieve an entropy cost of $T(n) \leq H(X_1,\dots,X_n) + \eps n + o(n)$,
where $\AState$ is the state space of the algorithm.
\\

In particular, suppose the sampler can generate samples from a family
$\DAlphabet$ of probability distributions over the natural numbers $\N$.
\begin{enumerate}[label={(\zcref*[noname]{theorem:space-lb}.\arabic*)},wide]
\item \label{item:state-space-lower-bounds-1}
If $\mathrm{Bernoulli}(1/3) \in \DAlphabet$,
then $\log(\abs{\AState}) \geq (1/5.116201 - o(1))\log(1/\eps)$.

\item \label{item:state-space-lower-bounds-2}
If $\mathrm{Bernoulli}(1/(2^m+1)) \in \DAlphabet$ for infinitely many $m \in \N$,
then $\abs{\AState} \geq 1/\eps$.
\qedhere
\end{enumerate}
\end{theorem}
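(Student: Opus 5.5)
We model the sampler as a finite-state machine with state space $\AState$. On receiving target $p$ in state $s$, it reads input bits until the consumed bit string $u$ determines an output $x$ and a next state $s'$. Exactness forces, for each state $s$ and each outcome $x$,
\[
\sum_{(u,s')} 2^{-|u|} = p(x),
\]
where the sum ranges over the strings $u$ that lead from $s$ to output $x$. The plan is to define a potential (``stored information'') function on states and to charge the entropy loss to how far that potential can track the pointwise loss $-\log p(x) - |u|$.

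Concretely, fix the i.i.d.\ target $p$ (for example $\mathrm{Bernoulli}(1/3)$). By averaging over a stationary regime of the induced Markov chain on $\AState$, we write the excess cost per round as $T(n)-H \ge \sum_{\text{rounds}} \mathbb{E}[\,|u| + \log p(X)\,]$. We then lower-bound each round's excess by a Kraft/Gibbs-type inequality relative to a state potential $\phi:\AState\to\R$. The claim to establish is
\[
\mathbb{E}\bigl[\,|u| + \log p(X) - \phi(s) + \phi(s')\,\bigr] \ge c\cdot \min_{\text{edges}} \delta_e^2 ,
\]
where $\delta_e$ measures the mismatch between the integer bit counts and the real numbers $-\log p(x) + \phi(s) - \phi(s')$. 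The potential terms telescope, so the total excess is at least $n$ times an edge-mismatch term.

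The key point is that $\phi$ cannot make every edge match exactly. Consider the state-transition graph, whose edges are labeled by outcomes $x$ and integer lengths $|u|$. Because the graph has $|\AState|$ vertices, any recurrent component contains a cycle of length at most $|\AState|$, and in fact cycles realizing each outcome. Summing around such a cycle eliminates $\phi$, leaving $\sum_{e}|u_e| + \sum_{e}\log p(x_e)$. The first sum is an integer, so the total mismatch on the cycle is at least $\modZ{\sum_{x} k_x \log p(x)}$, the distance to the nearest integer, where the integers $k_x$ satisfy $\sum_x k_x \le |\AState|$. Convexity then spreads this mismatch over at most $|\AState|$ edges. Since each edge is traversed with stationary frequency bounded below (again by a bound depending only on $|\AState|$), the excess per symbol must be at least roughly $\modZ{\cdot}^2$ divided by a polynomial in $|\AState|$. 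Requiring this to be at most $\eps$ yields a relation between $\eps$ and $|\AState|$.

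This reduces the problem to Diophantine estimates, which split into the two items of the theorem.

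\begin{itemize}
\item For \ref{item:state-space-lower-bounds-1}, we need $\modZ{a\log 3 + b} \ge K^{-\mu}$ for all $|a| \le K$, where $\mu$ is an effective irrationality measure of $\log_2 3$. Known bounds (Rhin, Wu) give $\mu \approx 5.1$, and the constant $5.116201$ presumably comes from this. Combined with polynomial losses in $|\AState|$ that are absorbed into the $o(1)$ term, we get $\log|\AState| \ge (1/5.116201 - o(1))\log(1/\eps)$.
\item For \ref{item:state-space-lower-bounds-2}, with $p = \mathrm{Bernoulli}(1/(2^m+1))$ we have $-\log p(1) = m + \log(1+2^{-m})$. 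Taking $m$ huge relative to $|\AState|$, the fractional parts $k\log(1+2^{-m}) \approx k2^{-m}/\ln 2$ are tiny but nonzero for $k \le |\AState|$. The rare-outcome edges must therefore carry an unavoidable loss: the frequent outcome $0$, with $-\log p(0)\approx 2^{-m}/\ln 2$, cannot be charged fractional bits without a counter. A careful accounting should show that the excess is at least about $1/|\AState|$ per symbol, which gives $|\AState| \ge 1/\eps$. This sharper linear bound needs an exact rather than a quadratic-loss analysis. I would try to show directly that the states must act as a counter of the accumulated fractional deficit, in steps of size $1/|\AState|$.
\end{itemize}

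The main obstacle is the per-edge loss inequality. Transitions consume variable-length strings, and several outputs and next states share prefixes, so relating the excess to cycle mismatches requires a Gibbs-inequality argument over the tree of bit strings read at each state. It must also have a clean dependence on $|\AState|$ so that only polynomial factors are lost.
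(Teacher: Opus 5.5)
Your plan has a genuine gap exactly at the step you call the main obstacle. The form you propose for that step is an expected, symmetric, quadratic per-edge bound, and that form cannot yield either item.

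The missing idea is a potential that controls every transition \emph{pointwise}. The paper sets $A_{i,j}$ to be the infimum of $\abs{\mathbf{c}} - \log(1/p(x))$ over all transitions $f(p,s_i,\mathbf{c}) = (x,s_j)$. It then takes $h(s)$ to be the shortest-path distance from $s_0$ in this weighted graph. Exactness rules out negative cycles: reach the cycle from $s_0$, iterate it $N$ times, and some output sequence gets probability larger than its target. The same argument gives $h \ge 0$. Consequently every realized transition satisfies $\abs{\mathbf{c}} - \log(1/p(x)) \ge h(s') - h(s)$. Around every realized cycle, the loss is a nonnegative number of the form (integer)$\,- y$ with $y = \sum_j \log(1/p(x_j))$, so it is at least $\lceil y\rceil - y = \modZ{-y} \ge \Gamma_p(\abs{\AState})$. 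By pigeonhole, every run of $\abs{\AState}$ steps contains a cycle. Telescoping then gives $\eps\abs{\AState} \ge \Gamma_p(\abs{\AState})$, with no stationary frequencies and no Gibbs inequality.

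Your starting assumption is also not available. You assume each state emits $x$ with probability exactly $p(x)$, but the definition only requires joint exactness from $s_0$. In the interval algorithm, in randomness recycling, and in a batched sampler, the output conditioned on the exact state is not distributed as $p$. So a per-state Gibbs bound relative to $p$ does not exist in general, whereas $h$ absorbs such stored information automatically.

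The difference also costs you quantitatively. For item~1, even granting your quadratic claim, you would get $\eps \gtrsim \Gamma_p(\abs{\AState})^2/\abs{\AState}^{c}$. That gives an exponent of $1/(c + 2\cdot 4.116201)$, not $1/5.116201$. Factors polynomial in $\abs{\AState}$ shift the exponent and cannot be absorbed into the $o(1)$; only constants can. The $1$ in $5.116201 = 1 + 4.116201$ is exactly the single factor $\abs{\AState}$ from one cycle per block. Your claim that edge frequencies are bounded below in terms of $\abs{\AState}$ alone is also false. An edge read by a string $u$ has probability $2^{-\abs{u}}$, and $\abs{u}$ is unbounded.

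For item~2, measuring mismatch by distance to the nearest integer is fatal. With $\alpha_m = \log(1+2^{-m})$, that distance is $k\alpha_m \to 0$, so your bound vanishes as $m \to \infty$. What matters is the one-sidedness that comes from the absence of negative cycles. The relevant quantity is $\modZ{-k\alpha_m} = 1 - k\alpha_m \ge 1 - \abs{\AState}\alpha_m$, which tends to $1$ along the admissible $m$ and gives $\abs{\AState} \ge 1/\eps$ immediately. Your ``counter'' picture is the right intuition, but it is not a proof.
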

\end{nicebox}

\Cref{theorem:space-lb} is stronger than the original
conjecture: it shows that a space lower bound holds even if the only output
distribution handled by the sampler is $\mathrm{Bernoulli}(1/3)$.
By slightly enlarging the set of distributions to include
Bernoulli distributions with rational probability $1/(2^m+1)$, we can
obtain a tighter space lower bound of $\ceil{\log(1/\eps)}$ bits.

We also establish an analogous space lower bound for random sampling with
i.i.d.~outputs, which is the setting studied by \citet{kozen2022}.
\begin{nicebox}
\begin{theorem}[store=theorem:space-lb-iid,restate-keys={note=Restated}]
\label{theorem:space-lb-iid}
Let $k \geq 2$ be an integer.
For almost all discrete distributions $p$ over $k$ outcomes,
any sampler generating i.i.d.~samples from $p$ with an entropy loss bound of $\eps$ bits per sample
requires at least $(1/(k+1)-o(1))\log(1/\eps)$ bits of persistent space.
\end{theorem}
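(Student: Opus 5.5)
We follow the same route as the proof of \Cref{theorem:space-lb}. The approach is a graph-theoretic cycle argument, combined with a Diophantine (metric) lower bound on fractional parts of integer combinations of the log-probabilities $\log p_1,\dots,\log p_k$.

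\textbf{Step 1: model and loss.} Model the i.i.d.\ sampler as a finite state machine on $\AState$. In each round it reads some input bits, emits an outcome $j\in\{1,\dots,k\}$, and moves to a new state. Every (state, bit-string) branch that emits $j$ carries probability $2^{-\ell}$ for an integer $\ell$. For a fixed output string, the probability is $\prod_i p_{x_i}$. So along any closed walk in the state-transition graph that emits outcome counts $(a_1,\dots,a_k)$ and consumes $\ell$ bits, exactness forces the consumed bits to pay for $\sum_j a_j \log(1/p_j)$. The integer $\ell$ cannot match this non-integer exactly. The resulting per-cycle loss is at least the distance from $\sum_j a_j\log(1/p_j)$ to the integers, normalized appropriately. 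Formally, we would use the cycle/loss lemma established for \Cref{theorem:space-lb}. It says that the entropy loss rate is at least $c\,\|\sum_j a_j\log p_j\|/L$ for some cycle of length $L\le|\AState|$ that is traversed with positive stationary frequency. Here $\|\cdot\|$ is the distance to the nearest integer and $a_j\le L$.

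\textbf{Step 2: pigeonhole for short cycles.} Since $\abs{\AState}=s$, every sufficiently long run revisits a state within $s$ steps. Decomposing the trajectory into such cycles, the loss per sample is at least
\[
\frac{c}{s}\,\min\Bigl\{\Bigl\|\sum_{j=1}^k a_j\log p_j\Bigr\| : a\in\Z_{\ge0}^k,\ 0<\textstyle\sum_j a_j\le s\Bigr\}.
\]
This bound must be at most $\eps$.

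\textbf{Step 3: metric Diophantine bound, the main obstacle.} Parametrize $p$ by $(p_1,\dots,p_{k-1})$ on the simplex. The map $p\mapsto(\log p_1,\dots,\log p_k)$ is a smooth nondegenerate curve-like map, and up to the linear relation $\sum_j p_j=1$ it is a local diffeomorphism onto a $(k-1)$-dimensional manifold in $\R^k$. We want the following bound for almost every $p$ and every $\delta>0$: for all nonzero $a$ with $|a|_\infty\le N$, $N$ large,
\[
\Bigl\|\sum_j a_j\log p_j\Bigr\|\ge N^{-(k+\delta)}.
\]
We would prove it by Borel--Cantelli. For each fixed nonzero $a$, the set of $p$ with $\|a\cdot\log p\|<\eta$ has measure $O(\eta)$. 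This holds because the gradient of $a\cdot\log p$ with respect to $(p_1,\dots,p_{k-1})$ is $a_j/p_j-a_k/p_k$. That gradient vanishes only where all $a_j/p_j$ are equal, which forces $a\propto p$, a measure-zero condition. Away from a small neighborhood of that locus, the coarea formula gives measure $O(\eta)$. There are $O(N^{k})$ vectors $a$ at scale $N$. Taking $\eta=N^{-k-\delta}$ and summing over dyadic $N$, we get $\sum N^{k}N^{-k-\delta}<\infty$. The delicate part is controlling the near-degenerate region uniformly, namely where $a$ is nearly proportional to $p$, so that the gradient is small. There we would use $\|\cdot\|$ near $0$ versus the size of $a\cdot\log p$, which is then about $|a|\cdot(\text{entropy-type quantity})$, bounded away from integers only generically. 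This may cost the extra $+1$ in the exponent. Concretely, bounding the bad measure by $O(\eta^{1/2})$ near degenerate directions might still sum. We accept an exponent of $k+\delta$.

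\textbf{Step 4: conclude.} Combine the steps with $N=s$: $\eps\ge (c/s)\,s^{-(k+\delta)}=c\,s^{-(k+1+\delta)}$. Hence $\log s\ge (1/(k+1+\delta))\log(1/\eps)-O(1)$. Letting $\delta\to0$ along a countable sequence, still on a full-measure set, yields $(1/(k+1)-o(1))\log(1/\eps)$.
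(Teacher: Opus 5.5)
\textbf{Steps 1, 2 and 4 are fine; Step 3 is a genuine gap.} Steps 1, 2 and 4 amount to the paper's Theorem~\ref{theorem:space-bound-gamma}: loss per sample is at least $\Gamma_p(\abs{\AState})/\abs{\AState}$. Replacing the one-sided $\modZ{\cdot}$ by distance to the nearest integer is a harmless weakening. Step~3, however, is the entire Diophantine content of the theorem, and you end up assuming it (``we accept an exponent of $k+\delta$'') rather than proving it.

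\textbf{Why the Borel--Cantelli sketch fails.} Your uniform claim $\mu\{p : \|a\cdot\log p\|<\eta\}=O(\eta)$ fails exactly for the vectors the sampler produces.
\begin{itemize}
\item For every $a\in\Z_{>0}^k$, the concave function $g_a(p)=\sum_j a_j\log p_j$ has a nondegenerate maximum at $p^*=a/\abs{a}_1$. When all entries of $a$ are comparable to $N$ (a positive proportion of the $a$ at scale $N$), the Hessian there is of order $N$.
\item So whenever the critical value $g_a(p^*)=-\abs{a}_1H(a/\abs{a}_1)$ lies within $\eta/2$ of an integer, the bad set contains a ball of radius of order $\sqrt{\eta/N}$.
\item Your fallback of $O(\eta^{1/2})$ per $a$ gives $N^{k}\cdot N^{-(k+\delta)/2}$ at scale $N$, which diverges for every $k$.
\item A careful dyadic-shell coarea estimate around $p^*$, using convexity of the superlevel sets, does rescue your route for $k\ge 3$. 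There the per-$a$ bad measure on a compact part of the simplex is $O(\eta\log(1/\eta))$, which sums.
\item For $k=2$, though, the sharp worst-case bound is of order $\sqrt{\eta/N}$. Then $N^{2}\cdot N^{-(3+\delta)/2}=N^{(1-\delta)/2}$ diverges.
\end{itemize}
Closing $k=2$ this way would need arithmetic control on how often $\|\abs{a}_1H(a/\abs{a}_1)\|$ is tiny, and nothing in your sketch supplies it. Conceding the extra $+1$ you mention would only give $(1/(k+2)-o(1))\log(1/\eps)$, e.g.\ $1/4$ rather than $1/3$ for Bernoulli distributions.

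\textbf{The missing ingredient and how the paper supplies it.} What is needed is a genuine theorem of metric Diophantine approximation on manifolds.
\begin{itemize}
\item The paper invokes Kleinbock--Margulis (Theorem~\ref{theorem:diophantine-approximation-on-manifolds}: nondegenerate manifolds are extremal). It applies it to $\mathbf f(p_1,\ldots,p_{k-1})=(\log p_1,\ldots,\log p_{k-1},\log(1-\sum_i p_i))$.
\item It verifies nondegeneracy by noting that the $k-1$ first-order partials together with $\partial^2\mathbf f/\partial p_1^2$ span $\R^k$. This yields $\nu(1,\log p_1,\ldots,\log p_k)\le k$ for almost every $p$.
\item Theorem~\ref{theorem:space-linear-independence}, with $q=1$, $\beta_i=0$ and $(\lambda_{i,j})$ the identity, converts this into $\Gamma_p(M)\ge M^{-k-o(1)}$.
\end{itemize}
That step also covers something your Step~4 skips. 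The metric bound holds only for large $\abs{a}$, so you additionally need $\Gamma_p(M)>0$ for every fixed $M$ (true for a.e.\ $p$). This forces $\abs{\AState}\to\infty$ as $\eps\to0$, which is what makes the asymptotic bound applicable.
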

\end{nicebox}
In the theorem statement, ``almost all'' refers to Lebesgue-almost every
point $p \in (0,1)^{k-1}$ with $\sum_{i=1}^{k-1}p_i < 1$, under the standard
parameterization $p \mapsto (p_1, \dots, p_{k-1}, 1-\sum_{i=1}^{k-1}p_i)$ of
the $(k-1)$-dimensional probability simplex.
Specializing \cref{theorem:space-lb-iid} to $k=2$, we obtain a lower bound of
$(1/3-o(1))\log(1/\eps)$ bits for almost every Bernoulli
distribution, improving the constant $1/5.116201$ in
\zcref[noname]{item:state-space-lower-bounds-1} from
\cref{theorem:space-lb}.
Moreover, \cref{theorem:space-lb-iid} applies to a less stringent setting
than \zcref[noname]{item:state-space-lower-bounds-2} from
\cref{theorem:space-lb}: the former need only
generate i.i.d.~samples from a fixed distribution $p$ with entropy loss at most $\eps$,
while the latter must handle infinitely many Bernoulli distributions.

\paragraph{Organization}
The rest of this paper is organized as follows.
\Cref{sec:background} reviews existing algorithms for random sampling,
illustrates the tradeoff between entropy efficiency and space complexity,
and provides intuition for our information-theoretic lower bounds.
\Cref{sec:preliminaries} formally defines online random samplers and entropy loss.
We then introduce two new concepts used to prove
the $\Omega(\log(1/\eps))$ lower bound on the size of the state space.
In \cref{sec:information}, we define an abstract
``state information content'' function for an arbitrary sampler
and show how to construct this function based only on the sampler's
transition function.
In \cref{sec:cycles}, we prove a very general bound showing the state-space
size needed to ensure an entropy loss of at most $\eps$ bits per sample
when generating i.i.d.~samples from a fixed distribution.
\Cref{sec:bernoulli} applies this bound to Bernoulli distributions
to prove \cref{theorem:space-lb}.
\Cref{sec:iid} applies the same bound, together with Diophantine approximation
on manifolds, to prove \cref{theorem:space-lb-iid}.
\Cref{sec:related-work} discusses related work in random sampling and coding theory.
\Cref{sec:open-questions} concludes with open questions.

\section{Background on Random Sampling}
\label{sec:background}

To motivate our lower bounds, we first present a simple example of a
random sampler that maintains persistent space.
This example demonstrates both the entropy-space tradeoff and the concept
of state information content, which is a main tool in our proof of
\cref{theorem:space-lb}.

\subsection{A Stateless Sampling Algorithm}
\label{sec:background-stateless}

Consider the problem of converting
an i.i.d.~sequence $(\Flip_1, \Flip_2, \Flip_3, \dots)$ of uniform bits
into an i.i.d.~sequence $(X_1, X_2, X_3, \ldots)$ of uniform trits;
i.e.,
$\Flip_i \sim \mathrm{Uniform}\set{0,1}$ and $X_i \sim \mathrm{Uniform}\set{0,1,2}$.
The standard inversion method~\citep[\S2]{devroye1986} generates $X_1$ by
first partitioning the unit interval
$J_1 \defeq [0,1]$ into three equally sized subintervals:
\begin{align}
[0,1] = \underbrace{\left[0,\textstyle\frac13\right]}_{J_{10}}
   \; \bigcup \; \underbrace{\textstyle\left[\frac13, \frac23\right]}_{J_{11}}
   \; \bigcup \; \underbrace{\textstyle\left[\frac23,1\right]}_{J_{12}}.
\label{eq:interval-partition-J1}
\end{align}
It then draws a uniform number $U \sim \mathrm{Uniform}[0,1]$,
and returns $X_1 \in \set{0,1,2}$ such that $U \in J_{1X_1}$.
To implement this algorithm using an entropy source that provides random bits,
we recognize that the expansion $U = (0.\Flip_1 \Flip_2 \Flip_3 \dots )_2$
satisfies $\Flip_i \sim \mathrm{Uniform}\set{0,1}$.
We keep drawing input bits until we can
determine whether $U \in J_{10}$, $U \in J_{11}$, or $U \in J_{12}$,
and set $X_1$ accordingly.

The top row of \cref{fig:interval} shows an example of this algorithm in
action.
The random bits $\Flip_1\Flip_2\Flip_3 = \texttt{100}$ let us determine that
$U \in J_{11}$, so $X_1 = 1$.
Generating an output in this manner costs 3 input bits on average,
whereas the output trit has $\log(3) \approx 1.585$ bits of entropy
and an optimal algorithm~\citep{knuth1976} uses ${\approx}\,2.667$ bits.
To generate $n$ i.i.d.~outputs $X_1, \dots, X_n$,
a stateless sampler repeats this process $n$ times from scratch,
consuming a total of $3n$ random bits on average.

\begin{figure}[t]
\centering
\def\scale{\linewidth}
\begin{tikzpicture}[thick,scale=1]
\draw[-] (0,0) -- (\scale,0)
   node[pos=0.5,yshift=10mm,anchor=center,draw=none]{
      \textbf{Iteration 1}: $\begin{aligned}
      J_1 \defeq \left[0,1\right]\!,\quad
      U \in I_1 \defeq \textstyle\left[\frac48, \frac58\right]\!, \quad
      X_1 \defeq 1
      \end{aligned}$}
      ;
\foreach \x in {0, 1, 2, 3} {
   \draw[-] (\scale*\x/3, -0.25) -- (\scale*\x/3,0.25) node[pos=0,below,font=\normalsize]{$\frac{\x}{3}$};
}
\foreach \x in {0, ..., 8} {
   \draw[-,opacity=0.5] (\scale*\x/8, -0.1) -- (\scale*\x/8,0.1) node[pos=0,below,font=\normalsize]{};
}
\draw[thick,color=blue,fill=blue,opacity=0.3,draw=none]
   (\scale*4/8, 0.2)
   rectangle (\scale*5/8,-0.2)
   node[pos=0.5,anchor=south,color=black,opacity=1,yshift=1mm,font=\ttfamily\scriptsize]{100};

\def\ypos{-3}
\draw[draw=none] (0,\ypos) -- (\scale,\ypos)
   node[pos=0.5,yshift=10mm,anchor=center]{
      \textbf{Iteration 2}: $\begin{aligned}
      J_2 \defeq \textstyle\left[\frac39, \frac69\right]\!,\quad
      U \in I_2 \defeq \textstyle\left[\frac{9}{16}, \frac{10}{16}\right]\!,\quad
      X_2 \defeq 2
      \end{aligned}$}
      ;

\draw[-,opacity=0.25] (0,\ypos) -- (1/3*\scale,\ypos);
\draw[-,opacity=1] (1/3*\scale,\ypos) -- (2/3*\scale,\ypos);
\draw[-,opacity=0.25] (2/3*\scale,\ypos) -- (3/3*\scale,\ypos);

\foreach \x in {0, ..., 3} {
   \pgfmathsetmacro{\opacity}{\x == 0 || \x == 3 ? 0 : 1}
   \pgfmathtruncatemacro{\xx}{3*\x}
   \draw[-,opacity=\opacity] ([yshift=-0.25cm]\scale*\x/3, \ypos) -- ([yshift=0.25 cm]\scale*\x/3,\ypos)
      node[pos=0,below,font=\normalsize]{$\frac{\xx}{9}$};
}

\coordinate (P) at ($(\scale * 1/3,\ypos) ! {1/3} ! (\scale * 2/3,\ypos)$);
\coordinate (Q) at ($(\scale * 1/3,\ypos) ! {2/3} ! (\scale * 2/3,\ypos)$);
\draw[-,draw] ([yshift=-0.25 cm]P) -- ([yshift=0.25 cm]P) node[pos=0,below,font=\normalsize]{$\frac{4}{9}$};
\draw[-,draw] ([yshift=-0.25 cm]Q) -- ([yshift=0.25 cm]Q) node[pos=0,below,font=\normalsize]{$\frac{5}{9}$};

\draw[thick,color=blue,fill=blue,opacity=0.3,draw=none]
   ([yshift=0.2cm]\scale*9/16, \ypos)
   rectangle
   ([yshift=-0.2cm]\scale*10/16, \ypos)
   node[pos=0.5,anchor=south,color=black,opacity=1,yshift=1mm,font=\ttfamily\scriptsize]{1001};

\foreach \x in {0,...,16} {
   \draw[-,opacity=.5]
      ([yshift=-0.1cm]\scale*\x/16, \ypos) -- ([yshift=0.1cm]\scale*\x/16, \ypos)
      node[pos=0,below,font=\normalsize]{};
}
\end{tikzpicture}
\caption{Two iterations of the interval algorithm~\citep{han1997} for
converting i.i.d.~flips $U$ of a fair two-sided coin into i.i.d.~rolls $X_1,
X_2, \dots$ of a fair three-sided ``dice''.
The state of the sampler at the end of iteration $n$
is the dyadic interval $I_n = [a_n/2^k, (a_n+1)/2^k]$
determined by the $k$ coin
tosses $U_1,\dots,U_k$ used so far, and the triadic interval
$J_n = [b_n/3^n, (b_n+1)/3^n]$ that was just split.
}
\label{fig:interval}
\end{figure}

\subsection{Reducing Entropy Cost by Maintaining State}
\label{sec:background-stateful}

\Citet{han1997}
introduce a random sampling method called the interval algorithm that
reduces the entropy loss of na\"ive inversion sampling by maintaining state
across iterations.
The key idea is shown in the second row of \cref{fig:interval}.
After generating the first output $X_1=1$, we have determined that
$U \in I_1 \defeq [4/8, 5/8] \subset [1/3,2/3] = J_{11}$.
Rather than resample from scratch,
which would discard the three bits used so far,
we split $J_{11} \eqdef J_2$ into three new subintervals
\begin{align}
\textstyle\left[\frac13,\frac23\right] = \underbrace{\textstyle\left[\frac39,\frac49\right]}_{J_{2,0}}
   \; \bigcup \; \underbrace{\textstyle\left[\frac49, \frac59\right]}_{J_{2,1}}
   \; \bigcup \; \underbrace{\textstyle\left[\frac59,\frac69\right]}_{J_{2,2}}.
\label{eq:interval-partition-J2}
\end{align}
We then draw fresh bits until we can determine that $U \in J_{2,X_2}$.
This approach effectively reuses the first three bits $\texttt{100}$ drawn
in the first iteration, by using the fact that $U \mid U \in [4/8,5/8]$
follows a uniform distribution over $[4/8, 5/8]$.
In the example from \cref{fig:interval}, when $\Flip_4 = \texttt{1}$
we can determine that $X_2 = 2$ using just \textit{one} extra bit, which
is impossible using the stateless sampler.
More generally, the interval algorithm requires 5 bits on average to produce two
outputs $X_1, X_2$, instead of the 6 bits used by the stateless sampler.

The entropy loss decreases rapidly with $n$; for $n=10$ the stateful algorithm
uses $17.8$ bits on average, which is close to the information-theoretically
optimum of roughly $16.4$ bits using the method of \citet{knuth1976}
and much smaller than the $30$ bits incurred by the stateless algorithm.
However, this improvement in entropy incurs a cost in terms of space
complexity.
The expected number of bits needed to represent the state $(I_n, J_n)$
of the interval algorithm grows as $\Theta(n)$, which is unbounded as $n \to \infty$
(a detailed analysis of the space and time complexity of the \citep{han1997}
interval method is given in \citet[Appendix A]{draper2026soda}).

\subsection{State Spaces of Existing Random Sampling Algorithms}
\label{sec:background-information-content}

Many entropy-efficient algorithms in the literature exhibit a similar
entropy-space tradeoff to the one described in \cref{sec:background-stateful}.
A key idea for our space lower bounds is to analyze the information
contained in the state of an online random sampler.
Recall that the information content of a random variable $X$ drawn from a
distribution $p$ is defined as $\log(1/p(X))$ \citep{shannon1948}.
For example, each outcome of a fair coin flip has
$\log(1/0.5) = 1$ bit of information, and an outcome with $p(X) = 1/3$ has
$\log(3) \approx 1.585$ bits of information.
We give a brief survey of the information stored in the state of various
sampling algorithms, before giving a formal definition
in \cref{sec:information}.

\paragraph{Interval Algorithm~\citep{han1997}}
For the method described in \cref{sec:background-stateful},
given a state $I = [I_l, I_r] \subseteq [J_l, J_r] = J$,
the total number of bits consumed so far is given by $\log(1/(I_r - I_l))$,
and the information content of the outputs so far is $\log(1/(J_r - J_l))$ bits
(for the case of all output trits in \cref{fig:interval},
$\log(1/(J_r - J_l))$ is the same as the number of outputs multiplied by $\log(3)$).
Because the output information cannot exceed the input information at any time,
$\log(1/(I_r - I_l)) - \log(1/(J_r - J_l)) \geq 0$ provides an upper bound on the
information content which the algorithm can extract from the state for future samples.
Hence, we call $\log({J_r-J_l}) - \log({I_r-I_l})$ the \textit{state information content}.

\paragraph{Randomness Recycling~\citep{draper2026soda}}
The online random sampling algorithms of \citet{draper2026soda}
maintain a state consisting of
a pair of integers $(Z,M)$, where $M \geq 1$ and
$Z$ is uniformly distributed over $\set{0,1,\ldots,M-1}$.
Therefore, the information content of the state $(Z,M)$ is
given by the information content of $Z$, namely $\log M$,
because each outcome has probability $1/M$.
These uniform states were used by \citet{jacques2004} for the special case
of generating a sequence of discrete uniform random variables using
i.i.d.~random bits.
\Cref{fig:states-rr} shows an example.
A uniform integer $U$ is drawn between 0 and 15.
If the pink cell with label 2 is selected, then $\mathrm{A}$ is returned.
The conditional distribution of $U$ is uniform over $\set{0,\ldots,4}$;
so the uniform state is $(Z\defeq 2, M\defeq 5)$
with $Z \mid M \sim \mathrm{Uniform}\set{0,\ldots,M-1}$.


\begin{figure}[!t]
\begin{subfigure}[b]{.6\linewidth}
\centering
\footnotesize
\newcommand{\clrA}{pink!50!white}
\newcommand{\clrB}{gray!15!white}
\newcommand{\clrC}{gray!15!white}
\newcommand{\clrD}{gray!15!white}
\newcommand{\cA}[1][\mathrm{A}]{\cellcolor{\clrA}#1}
\newcommand{\cB}[1][\mathrm{B}]{\cellcolor{\clrB}#1}
\newcommand{\cC}[1][\mathrm{C}]{\cellcolor{\clrC}#1}
\newcommand{\cD}[1][\mathrm{D}]{\cellcolor{\clrD}#1}
\begin{align*}
\NiceMatrixOptions{custom-line={letter = I, tikz = {line width=2pt, black }}}
\begin{NiceArray}[hvlines]{cccccIccccIccIccccc}
\cA[0] & \cA[1] & \Block[tikz={pattern=crosshatch, pattern color=gray!50!white, preaction={fill=pink!50!white}}]{}{2} & \cA[3] & \cA[4] &
\cB[0] & \cB[1] & \cB[2] & \cB[3] &
\cC[0] & \cC[1] &
\cD[0] & \cD[1] & \cD[2] & \cD[3] & \cD[4]
\CodeAfter
  \OverBrace[shorten,yshift=4pt]{1-1}{1-5}{\mathrm{A}}
  \OverBrace[shorten,yshift=4pt]{1-6}{1-9}{\mathrm{B}}
  \OverBrace[shorten,yshift=4pt]{1-10}{1-11}{\mathrm{C}}
  \OverBrace[shorten,yshift=4pt]{1-12}{1-16}{\mathrm{D}}
\end{NiceArray}
\end{align*}
\caption{Randomness Recycling~\citep{draper2026soda}}
\label{fig:states-rr}
\end{subfigure}\hfill
\begin{subfigure}[b]{.4\linewidth}
\centering
\footnotesize
\tikzset{every tree node/.style={anchor=north}}
\tikzset{level distance=12pt}
\tikzset{sibling distance=1pt}
\begin{tikzpicture}
\Tree[
  [
    \node[label={below:(A,0)},draw=black,circle,fill=pink]{};
    \node[label={below:(B,0)},draw=black,circle,fill=gray]{};
  ]
  [
    [
      \node[label={below:(C,0)},draw=black,circle,fill=gray]{};
      [
        \node[label={below:(A,1)},draw=black,circle,preaction={fill=pink},pattern=crosshatch,pattern color=gray]{};
        \node[label={below:(D,1)},draw=black,circle,fill=gray]{};
      ]
    ]
    \node[label={below:(D,0)},draw=black,circle,fill=gray]{};
  ]
]
\end{tikzpicture}
\caption{DDG Tree Sampling~\citep{knuth1976}}
\label{fig:states-ddg}
\end{subfigure}

\caption{Examples of random state extraction when using randomness recycling and DDG tree sampling,
for a discrete distribution
$\set{\mathrm{A} \mapsto 5/16, \mathrm{B}\mapsto 4/16, \mathrm{C} \mapsto 2/16, \mathrm{D} \mapsto 5/16}$.
}
\label{fig:states}
\end{figure}

\paragraph{DDG Tree Samplers~\citep{knuth1976}}
Many sampling algorithms explicitly traverse a data structure called the
discrete distribution generating (DDG)
tree~\citep{knuth1976,saad2025,draper2026ieee,saad2020fldr,roy2013}.
The online sequential version of the algorithm maintains a state consisting
of a binary tree, together with the current traversal position (which is a
leaf node, after producing a sample).
As noted by \citet[\S2.3.2]{devroye2020}, the usable information in such a DDG state
is given by the distribution of leaves in the tree that share the same label
as the leaf at the current position
(and this information can be reduced to an interval state by
fixing an ordering among the leaves sharing the same label).
Therefore, the state information content can be defined as the negative log-probability
of the leaf, under the distribution of leaves that share the same output label.
\Cref{fig:states-ddg} shows an example.
A random leaf is selected with log-probability proportional to its depth.
If leaf $(\mathrm{A},1)$ is selected, then $\mathrm{A}$ is returned.
The conditional distribution over the $\mathrm{A}$ leaves is
$M \defeq \set{0 \mapsto 4/5, 1 \mapsto 1/5}$;
so the nonuniform state is $(Z \defeq 1, M)$
with $Z \mid M \sim M$.

\paragraph{Offline Batched Sampling}
Lastly, consider the problem where all target distributions
$p_1, p_2, \ldots$ are known in advance,
and the samples are generated in batches to improve entropy efficiency
(e.g., by applying the interval algorithm to the joint distribution
$p_{1:n}(x_1,\dots,x_n) \defeq p_1(x_1)\dots p_n(x_n)$
of a batch of $n$ samples),
while no state is maintained between batches.
In the model where outputs can only be emitted one at a time, the sampler generates
$X_1,\ldots,X_n$ but only emits $X_1$ in the first round;
the state stored for the second round is simply the remaining samples $X_2, \ldots, X_n$,
which have information content $\log(1/p(X_2, \ldots, X_n))$ bits.
In general, the state information content of this batched sampler is
the sum of the information content of the samples which have been generated but not yet emitted.
Although batched samplers with knowledge of future distributions do not satisfy our
definition of online random samplers in \cref{def:online-random-sampling},
for the special case of generating i.i.d.~samples
(i.e., there is only one possible target distribution),
our lower bounds apply.

\subsection{Proof Idea}

By maintaining state, an online random sampler can store information that
has been consumed from the input source but not yet emitted in an output sample.
Suppose a sampler is at state $s$, reads $r$ input bits, produces output
$x \sim p$, and moves to a new state $s'$.
The information loss of this transition is the
difference between the consumed and produced information:
\begin{equation}
0 \le \left(r + h(s)\right) - \left(I_p(x) + h(s')\right),
\end{equation}
where $h(s)$ and $h(s')$ denote the information content of states $s$ and $s'$,
and $I_p(x) \defeq \log(1/p(x))$ is the information content of the output sample.
For the loss to be small, say at most $\eta$, we need
\begin{equation}
h(s') \in [h(s)+r-I_p(x)-\eta, h(s)+r-I_p(x)].
\end{equation}
As the number $r$ of input bits is an integer, the \textit{fractional}
state information content must satisfy
\begin{equation}
\modZ{h(s')} \in \set{\modZ{v} \mid v \in [h(s)-I_p(x)-\eta, h(s)-I_p(x)]},
\end{equation}
where $\modZ{x}$ is the fractional part of a real number $x$.
A transition that achieves a small loss must therefore choose a next state $s'$
whose fractional information content lies just before the ideal position
$\modZ{h(s)-I_p(x)}$ that would achieve zero loss; the gap between the two
becomes permanently lost information.
Therefore, to guarantee a small loss, the possible states of the algorithm
should have information content whose fractional parts are densely spaced in $[0,1)$.

One strategy is to use states with information content uniformly spaced
over the unit interval.
\Cref{fig:demo} illustrates this idea
for $S=5$ states $s_0,\ldots,s_4$ with fractional information content
$\modZ{h(s_i)} = i/5$ for $i=0,\dots,4$.
By choosing an appropriate new state after each sample, such a sampler
could theoretically ensure that fewer than $\eta \defeq 1/S=0.2$ bits are lost
after each sample.


\begin{figure}[H]
\centering
\definecolor{cbBlue}{HTML}{0072B2}
\definecolor{cbRed}{HTML}{D55E00}
\tikzset{ckpt/.style={anchor=center,circle,minimum size=6pt,fill=black,inner sep=0pt,}}
\begin{tikzpicture}[thick]
\draw[-, line width=1pt] (0,0) -- (14,0)
  node[name=ha,pos=0.0,ckpt,label={below:$s_0$}]{}
  node[name=ia,pos=0.17,ckpt,draw=none,fill=none]{}
  node[name=hb,pos=0.2,ckpt,label={below:$s_1$}]{}
  node[name=ib,pos=0.37,ckpt,draw=none,fill=none]{}
  node[name=hc,pos=0.4,ckpt,label={below:$s_2$}]{}
  node[name=ic,pos=0.57,ckpt,draw=none,fill=none]{}
  node[name=hd,pos=0.6,ckpt,label={below:$s_3$}]{}
  node[name=id,pos=0.77,ckpt,draw=none,fill=none]{}
  node[name=he,pos=0.8,ckpt,label={below:$s_4$}]{}
  node[name=ie,pos=0.97,ckpt,draw=none,fill=none]{}
  node[name=hf,pos=1.0,ckpt,label={below:$s_0$}]{}
  node[name=zero,pos=0.0,ckpt,draw=none,fill=none,label={left:$0$}]{}
  node[name=one,pos=1.0,ckpt,draw=none,fill=none,label={right:$1$}]{}
  ;

\node[draw=none, fill=cbBlue, opacity=0.25, thick, inner sep=0pt, fit=(ia.south) (hb.north)] {};
\node[draw=none, fill=cbRed,   opacity=0.25, thick, inner sep=0pt, fit=(ha.south) (ia.north)] {};
\node[draw=none, fill=cbBlue, opacity=0.25, thick, inner sep=0pt, fit=(ib.south) (hc.north)] {};
\node[draw=none, fill=cbRed,   opacity=0.25, thick, inner sep=0pt, fit=(hb.south) (ib.north)] {};
\node[draw=none, fill=cbBlue, opacity=0.25, thick, inner sep=0pt, fit=(ic.south) (hd.north), label={[font=\scriptsize]above:$\alpha$}] {};
\node[draw=none, fill=cbRed,   opacity=0.25, thick, inner sep=0pt, fit=(hc.south) (ic.north)] {};
\node[draw=none, fill=cbBlue, opacity=0.25, thick, inner sep=0pt, fit=(id.south) (he.north)] {};
\node[draw=none, fill=cbRed,   opacity=0.25, thick, inner sep=0pt, fit=(hd.south) (id.north)] {};
\node[draw=none, fill=cbBlue, opacity=0.25, thick, inner sep=0pt, fit=(ie.south) (hf.north)] {};
\node[draw=none, fill=cbRed,   opacity=0.25, thick, inner sep=0pt, fit=(he.south) (ie.north)] {};

\draw[-latex] (hb.south) to[out=210,in=330] (ha.south);
\draw[-latex] (hc.south) to[out=210,in=330] (hb.south);
\draw[-latex,dashed] (hd.south) to[out=210,in=330] node[pos=0.5,yshift=-.25cm,font=\footnotesize]{$1/S$} (hc.south);
\draw[-latex] (he.south) to[out=210,in=330] (hd.south);
\draw[-latex] (hf.south) to[out=210,in=330] (he.south);

\draw[decorate,black,decoration={brace,raise=.275cm}] (hc.center) -- (ic.center) node[pos=0.5,yshift=0.6cm,font=\footnotesize]{information loss};

\end{tikzpicture}
\caption{Visualization of the information lost during a state transition of a sampler.}
\label{fig:demo}
\end{figure}

We construct distributions that \textit{force} the sampler to lose nearly
$1/S$ bits of information on average.
Consider a low-entropy $\mathrm{Bernoulli}(\delta)$
distribution, for some small $\delta > 0$.
When the sampler generates the likely outcome $X=0$,
the output sample has information content
$\alpha \defeq \log(1/(1-\delta)) = \Theta(\delta)$,
which is small but positive.
So if the sampler retained the same state $s' = s$, it would need to consume
at least one fresh bit to produce the output, losing $1-\alpha$ bits.
Otherwise, the sampler must transition to a new state $s' \ne s$,
and the gap of $1/S$ between states means that at least $1/S-\alpha$ bits will be lost.
\Cref{fig:demo} illustrates the $\alpha$ bits of output information using blue shading,
and the $1/S-\alpha$ bits of information lost over a transition is shown using orange shading,
with a label for the $s_3 \to s_2$ information loss.
Thus, sampling the likely outcome $X=0$ contributes
$(1-\delta)(1/S-\alpha)$ to the average information loss,
which approaches $1/S$ as $\delta \to 0$.

In the more general case where the state information content is not uniformly spaced,
we see that the blue intervals remain the same,
while the orange intervals may shrink or expand,
so certain transitions may become more efficient and others less efficient.
However, the overall loss after $S$ transitions remains $1-S\alpha$
because the total length of the orange intervals is invariant.
Thus, the loss per sample remains $1/S$, so we need $S \geq 1/\eps$
to ensure that the loss per sample is at most $\eps$.

In the remainder of this paper, we formalize this intuition to establish
\cref{theorem:space-lb}, and then generalize it to prove lower bounds for
even weaker samplers with i.i.d.~outputs to prove \cref{theorem:space-lb-iid}.

\section{Preliminaries}
\label{sec:preliminaries}

We now formally define online random samplers and information loss.

\begin{notation}
The base-two logarithm is denoted by $\log$.
The fractional part of a real number $x$ is denoted by $\modZ{x} \defeq x - \floor{x}$.
The set of finite binary strings is denoted by $\set{0,1}^*$,
and the length of a binary string $\mathbf{c} \in \set{0,1}^*$ is $\abs{\mathbf{c}}$.
The set of natural numbers is $\N \defeq \set{0,1,\dots}$.
The support of a distribution $p$ is $\supp(p) \defeq \set{x \in \N \mid p(x) > 0}$.
The greatest lower bound of a set $S \subset \R$ is $\inf S$,
which is defined to be $+\infty$ if $S$ is empty.
\end{notation}

\begin{definition}
\label{def:online-random-sampling}
Let $\DAlphabet$ be a set of discrete probability distributions over
$\N \defeq \set{0,1,\dots}$.
Let $\AState$ be a set of states and $s_0 \in \AState$ a designated initial state.
An \textit{online random sampler} is a partial function
\begin{equation}
f : \DAlphabet \times \AState \times \set{0,1}^* \rightharpoonup \N \times \AState
\end{equation}
such that the following conditions hold.
\begin{itemize}
\item \textit{Prefix Free}:
For every $p \in \DAlphabet$ and $s \in \AState$,
if
$(p, s, \mathbf{c}) \in \dom(f)$
then
$(p, s, \mathbf{c}') \notin \dom(f)$
for every $\mathbf{c}, \mathbf{c}' \in \set{0,1}^*$
such that $\mathbf{c}$ is a proper
prefix of $\mathbf{c}'$.

\item \textit{Exhaustive}:
For every $p \in \DAlphabet$ and $s \in \AState$,
we have
$\sum_{\substack{\mathbf{c} \mid (p,s,\mathbf{c}) \in \dom(f)}}
   2^{-\abs{\mathbf{c}}} = 1$.

\item \textit{Exact}: For all distribution sequences
$p_1, p_2, \ldots \in \DAlphabet$
and output sequences $x_1, x_2, \ldots \in \N$
and for each finite prefix length $n \in \N$, we have
\begin{equation}
\sum_{\substack{(s_1, \mathbf{c}_1),\ \dots,\ (s_n, \mathbf{c}_n) \\
                \mbox{s.t. }\forall i \in \set{1, \ldots, n}, \;
                f(p_i, s_{i-1}, \mathbf{c}_i) = (x_i, s_i)}}
\left( 2^{-\sum_{i=1}^n \abs{\mathbf{c}_i}} \right)
= \prod_{i=1}^n p_i(x_i).
\qedhere
\label{eq:online-random-sampling-exact}
\end{equation}
\end{itemize}
\end{definition}

\begin{remark}
To avoid states with undefined behavior,
we assume without loss of generality that each state $s$ is
reachable from the initial state $s_0$, i.e.,
there exist distributions $p_1, \ldots, p_k \in \DAlphabet$,
outputs $x_1, \ldots, x_k \in \N$,
input bit sequences $\mathbf{c}_1, \ldots, \mathbf{c}_k \in \set{0,1}^*$,
and states $s_1, \ldots, s_k = s$ such that
$f(p_i, s_{i-1}, \mathbf{c}_i) = (x_i, s_i)$ for each $i \in \set{1, \ldots, k}$.
This condition is equivalent to restricting $\AState$
to be the subset of states reachable from $s_0$.
\end{remark}

\Cref{def:online-random-sampling} generalizes the
formulation in \citet[Listing 1.1]{draper2026soda}.
As we are not concerned with the computational complexity at sampling time,
we allow noncomputable dependence of the sampler on the distributions.
With this definition, our lower bounds apply in full generality, with no
assumptions on how irrational probabilities are represented.

\begin{definition}
\label{def:information-loss}
The \textit{information loss} of an online random sampler $f$
when sampling $x_1, \ldots, x_n$ from target distributions $p_1, \ldots, p_n$
using input bit sequences $\mathbf{c}_1, \ldots, \mathbf{c}_n$ is
\begin{equation}
\sum_{i=1}^n \left( \abs{\mathbf{c}_i} - \log\frac{1}{p_i(x_i)} \right).
\end{equation}
The \textit{entropy loss} of $f$ when sampling from $p_1, \ldots, p_n$ is
\begin{equation}
\sum_{\substack{(s_1,x_1,\mathbf{c}_1),\ \ldots,\ (s_n,x_n,\mathbf{c}_n) \\
                \mbox{s.t. }\forall i \in \set{1, \ldots, n}, \;
                f(p_i, s_{i-1}, \mathbf{c}_i) = (x_i, s_i)}}
   \left( \sum_{i=1}^n \abs{\mathbf{c}_i} - \log{\frac{1}{p_i(x_i)}} \right)
   2^{-\sum_{i=1}^{n}\abs{\mathbf{c}_i}},
\end{equation}
which can be understood as the expected information loss given uniformly random input bits.
The sampler $f$ is said to achieve an
\textit{entropy loss bound} of $\eps$ bits per sample if
for all $p_1,p_2,\ldots \in \DAlphabet$,
the entropy loss of $f$ when sampling from $p_1, \ldots, p_n$
is at most $\eps n + o(n)$ as $n \to \infty$.
\end{definition}

\section{State Information Content}
\label{sec:information}

An online random sampler maintains some state between each iteration.
We can analyze the information content of the state given a description
of how the sampler operates, as in the interval method from \cref{sec:background}.
However, to prove general lower bounds on the state space, we need a way
to analyze the information content of a state without
knowing the inner workings of the sampler.
The key fact we need about the state information content is that it gives
an upper bound on how much information the sampler can extract from the
stored state for future samples, in place of drawing fresh random bits from
the entropy source.
Therefore, we can define the information content of a state
for an arbitrary online random sampler as follows.

\begin{definition}
\label{def:state-information-content}
Let $f$ be an online random sampler with state space $\AState$
and target distribution set $\DAlphabet$.
The \textit{state information content} is a function
$h : \AState \to \R_{\geq 0}$ such that $h(s_0) = 0$, and
for each pair of states $s, s' \in \AState$,
\begin{equation}
h(s') + \log(1/p(x)) \leq h(s) + \abs{\mathbf{c}},
\end{equation}
whenever $f(p, s, \mathbf{c}) = (x, s')$ for some target distribution $p \in \DAlphabet$,
some $x$ in the support of $p$,
and some finite sequence of input bits $\mathbf{c}$.
\end{definition}

This definition is similar to the data processing inequality,
because the input state information is independent of the input bits
and the output state information is independent of the output sample.
The definition implies that any state information content function
must satisfy a system of difference constraints,
which we can compactly express in an $\abs{\AState} \times \abs{\AState}$ matrix:
\begin{equation}
\label{eq:difference-constraints}
h(s_j) - h(s_i) \leq A_{i, j},
\end{equation}
where
\begin{equation}
\label{eq:difference-constraints-matrix}
A_{i, j} \defeq
\inf \set*{ \abs{\mathbf{c}} - \log\frac{1}{p(x)}
   \; \Bigg\vert\;
   p \in \DAlphabet,
   x \in \N,
   \mathbf{c} \in \set{0,1}^*,
   f(p, s_i, \mathbf{c}) = (x, s_j)
   }.
\end{equation}

Informally, $A_{i,j}$ is the minimal possible marginal information loss incurred
when transitioning from state $s_i$ to $s_j$,
including any extra information put into the state.
When $s_i$ contains more information than $s_j$, we may have $A_{i,j} < 0$,
but the quantity $A_{i,j} - (h(s_j) - h(s_i)) \geq 0$
(i.e., slack in \cref{eq:difference-constraints})
will represent permanently lost information
that is contained in neither the output $x \sim p$ nor the next state $s_j$.
To find a solution to this system of difference constraints
\cref{eq:difference-constraints},
we can apply classical techniques used in graph algorithms.

\begin{theorem}[Feasibility of difference constraints \citep[Theorem 22.9]{clrs2022}]
\label{theorem:difference-constraints-feasibility}
Let $G$ be the directed graph with vertex set $\AState$
and edge weights $A_{i, j}$ between each pair of vertices $s_i, s_j$.
If $G$ contains no negative-weight cycles, then
a feasible solution to the system \cref{eq:difference-constraints}
is given by
\begin{equation}
h(s_i) = \delta(s_0, s_i),
\end{equation}
where $s_0$ is the initial state and $\delta(s_0, s_i)$ is
the length of the shortest path from $s_0$ to $s_i$ in $G$.
If $G$ contains a negative-weight cycle, then
there is no feasible solution for the system.
\end{theorem}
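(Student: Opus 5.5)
The plan is the standard Bellman--Ford argument, specialized to the weighted graph $G$ on $\AState$ with weights $A_{i,j}$ from \cref{eq:difference-constraints-matrix}; missing transitions get weight $+\infty$. The proof has two directions, and I will treat them separately.

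\textbf{Infeasibility direction.} Suppose $G$ contains a cycle $s_{i_0}\to s_{i_1}\to\cdots\to s_{i_m}=s_{i_0}$ of negative total weight, and suppose $h$ satisfies \cref{eq:difference-constraints}. Summing the inequalities $h(s_{i_{t+1}})-h(s_{i_t})\le A_{i_t,i_{t+1}}$ over the cycle makes the left-hand side telescope to $0$. This gives $0\le\sum_t A_{i_t,i_{t+1}}<0$, a contradiction.

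\textbf{Feasibility direction.} Assume $G$ has no negative-weight cycles.

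1. \emph{Well-definedness.} First I show $\delta(s_0,s_i)$ is well defined and finite for every $i$.
   - It is less than $+\infty$ because every state is reachable from $s_0$ (by the remark after \cref{def:online-random-sampling}). Each transition witnessing reachability has finite weight.
   - It is greater than $-\infty$ because, with no negative cycles, any walk can be shortened to a simple path without increasing its weight, by deleting cycles of nonnegative weight. So $\delta(s_0,s_i)$ is an infimum over simple paths.
   - When $\AState$ is finite, there are finitely many simple paths, so the infimum is a minimum.
   - Each $A_{i,j}$ is itself an infimum over transitions and might be $-\infty$. But a self-loop or cycle with weight $-\infty$ would be a negative cycle. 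More carefully, an edge $(i,j)$ of weight $-\infty$, together with a finite path back from $s_j$ to $s_i$, would form such a cycle. I would need that return path, or else restrict to the strongly connected structure; I would handle this by noting that $A_{i,j}$ finite follows whenever $s_i$ is reachable from $s_j$.

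2. \emph{Triangle inequality.} For any edge $(i,j)$, appending that edge to any $s_0\to s_i$ path gives an $s_0\to s_j$ path. Taking infima yields $\delta(s_0,s_j)\le\delta(s_0,s_i)+A_{i,j}$, which is exactly \cref{eq:difference-constraints} for $h=\delta(s_0,\cdot)$.

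3. \emph{Normalization.} The empty path shows $\delta(s_0,s_0)\le 0$. The absence of negative cycles shows every closed walk at $s_0$ has weight at least $0$. Together these give $\delta(s_0,s_0)=0$, matching $h(s_0)=0$.

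\textbf{Main obstacle.} The delicate part is the infinite-state setting and infinite edge weights. The infimum defining $A_{i,j}$ may not be attained, and with infinitely many states, shortest paths need not exist even without negative cycles. In that case I would state the result for finite $\AState$, which is the only case used in the space lower bounds. For infimum edge weights I would argue with walks directly, so that attainment of $A_{i,j}$ is never needed.
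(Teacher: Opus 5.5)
Your two directions are the standard argument behind the CLRS theorem the paper cites, which the paper does not reprove: the triangle inequality $\delta(s_0,s_j)\le\delta(s_0,s_i)+A_{i,j}$ gives feasibility, and telescoping around a negative cycle gives infeasibility. Using reachability from $s_0$ in place of CLRS's auxiliary source vertex with zero-weight edges is the right adaptation for the paper's version, and so is checking $\delta(s_0,s_0)=0$.

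The one step that does not go through is your fix for weights $A_{i,j}=-\infty$. You cannot assume $s_i$ is reachable from $s_j$, because the state graph need not be strongly connected. For an edge lying on no cycle, the hypothesis ``no negative cycles'' says nothing. Take two states with $A_{0,1}=-\infty$ and every other weight $+\infty$, i.e.\ absent in your convention. There is no cycle at all, yet $\delta(s_0,s_1)=-\infty$, and the constraint $h(s_1)-h(s_0)\le A_{0,1}$ has no real solution. So, as a pure graph statement, the theorem needs weights in $(-\infty,+\infty]$ (CLRS assumes real weights), and no reachability argument can supply this.

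In the paper, the lower bound on the weights comes from exactness instead. Suppose $A_{i,j}=-\infty$. Take distributions $q_t$, outputs $y_t$ and inputs $\mathbf{b}_t$ ($t=1,\ldots,l$) driving $s_0$ to $s_i$. Take also a transition $f(p,s_i,\mathbf{c})=(x,s_j)$ with $\abs{\mathbf{c}}-\log(1/p(x))<-\sum_t\abs{\mathbf{b}_t}$. The output sequence $y_1\cdots y_l\,x$ then has probability at least $2^{-\sum_t\abs{\mathbf{b}_t}-\abs{\mathbf{c}}}>p(x)\ge\prod_t q_t(y_t)\,p(x)$, which violates exactness. This is the argument of \cref{lemma:state-information-content-nonnegative}, and the paper notes it also certifies $A_{i,j}>-\infty$.

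The same argument shows that every path out of $s_0$ has nonnegative weight. Hence $\delta(s_0,\cdot)$ is finite even for infinite $\AState$, so your restriction to finite $\AState$ is harmless but not needed.
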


A shortest path from $s_0$ to each $s_i$ in $G$ can be computed using the
Bellman-Ford algorithm.
We rule out the existence of negative-weight cycles in $G$ by analyzing
the probability of input and output sequences corresponding to such a cycle.
A negative-weight cycle would imply that the sampler generates a certain
output sequence with greater probability than it correctly should.

\begin{theorem}[Nonexistence of negative cycles]
\label{theorem:no-negative-cycles}
Let $G$ be the directed graph with vertex set $\AState$
and edge weights $A_{i, j}$ between each pair of vertices $s_i, s_j$.
There are no negative-weight cycles in $G$.
\end{theorem}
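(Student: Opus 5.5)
We argue by contradiction. Suppose $G$ contains a cycle $s_{i_0} \to s_{i_1} \to \cdots \to s_{i_m} = s_{i_0}$ whose total weight $W \defeq \sum_{t=1}^{m} A_{i_{t-1}, i_t}$ is negative. Each $A_{i,j}$ is an infimum, so for any $\eta > 0$ we can pick witnesses $(p_t, x_t, \mathbf{c}_t)$ with $f(p_t, s_{i_{t-1}}, \mathbf{c}_t) = (x_t, s_{i_t})$ and
\begin{equation*}
\abs{\mathbf{c}_t} - \log(1/p_t(x_t)) \le A_{i_{t-1}, i_t} + \eta/m .
\end{equation*}
Choosing $\eta < -W$ gives a single round trip of the cycle with
\begin{equation*}
\beta \defeq \sum_{t=1}^m \abs{\mathbf{c}_t} - \sum_{t=1}^m \log\frac{1}{p_t(x_t)} < 0,
\end{equation*}
that is, $2^{-\sum_t \abs{\mathbf{c}_t}} > \prod_t p_t(x_t)$. (Only edges with at least one witness matter, since empty infima are $+\infty$. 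We may also assume $x_t \in \supp(p_t)$: if $p_t(x_t)=0$ were reachable, exactness would already fail at that step, because a single reachable path has positive weight $2^{-\abs{\mathbf{c}}}$.)

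Next we embed this cycle in a run from the initial state. By the reachability remark, there are $q_1,\dots,q_k \in \DAlphabet$, outputs $y_1,\dots,y_k$, and codes $\mathbf{d}_1,\dots,\mathbf{d}_k$ that drive $s_0$ to $s_{i_0}$; the path weight $2^{-\sum_j\abs{\mathbf{d}_j}} =: \gamma$ is positive. Now take the distribution sequence $q_1,\dots,q_k$ followed by $N$ repetitions of $p_1,\dots,p_m$, and the output sequence $y_1,\dots,y_k$ followed by $N$ repetitions of $x_1,\dots,x_m$. The state/code path that follows the prefix and then traverses the cycle $N$ times is one admissible term in the sum of \cref{eq:online-random-sampling-exact}. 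All terms are nonnegative, so exactness gives
\begin{equation*}
\gamma \cdot 2^{-N\sum_t \abs{\mathbf{c}_t}} \le \prod_{j=1}^k q_j(y_j) \cdot \Bigl(\prod_{t=1}^m p_t(x_t)\Bigr)^{N}.
\end{equation*}
Taking logarithms yields $\log\gamma - \log\prod_j q_j(y_j) \le N\beta$. The left side is a fixed finite number, because the same exactness argument applied to the prefix alone gives $\prod_j q_j(y_j) \ge \gamma > 0$. The right side tends to $-\infty$ as $N\to\infty$ since $\beta<0$. This is a contradiction.

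The main subtlety is the bookkeeping in the exactness sum rather than any hard estimate. We must check that the concatenated sequence is a legitimate input to \cref{eq:online-random-sampling-exact}; it is, since $\DAlphabet$ sequences are arbitrary and the sum ranges over all consistent paths. We must also handle the infimum, which may not be attained, via the $\eta$-slack witnesses. Amplifying over $N$ repetitions is what absorbs the unknown prefix constant. Without it, a single traversal only gives a violation relative to the prefix probability, which could be tiny. With repetition, the per-cycle deficit $2^{\beta}<1$ compounds and eventually dominates.
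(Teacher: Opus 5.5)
Your proposal is correct and follows essentially the same route as the paper's proof: choose near-optimal witnesses so that one traversal of the cycle has negative total information loss, prepend a path from $s_0$ to the cycle's start, and repeat the cycle $N$ times until the prefix cost is outweighed, which contradicts exactness. The only differences are cosmetic. You let $N\to\infty$ against the exact prefix probability, whereas the paper fixes an explicit $N$ and uses $\prod_j q_j(y_j)\le 1$. You also explicitly rule out witnesses with $p_t(x_t)=0$, a case the paper leaves implicit.
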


\begin{proof}
Assume, for the sake of contradiction, that there is a negative-weight cycle
$(i_0, i_1, \dots, i_k = i_0)$
in $G$, i.e.,
$A_{i_0, i_1} + A_{i_1, i_2} + \cdots + A_{i_{k-1}, i_k} < 0$.
Define
\begin{equation}
\epsilon \defeq - (A_{i_0, i_1} + A_{i_1, i_2} + \cdots + A_{i_{k-1}, i_k}) / k.
\label{eq:epsilon}
\end{equation}
For each $j \in \set{1, 2, \ldots, k}$, let $p_j \in \DAlphabet$ and $x_j \in \N$
and $\mathbf{c}_j \in \set{0,1}^*$ satisfy
\begin{equation}
f(p_j, s_{i_{j-1}}, \mathbf{c}_j) = (x_j, s_{i_j}),
\qquad
\abs{\mathbf{c}_j} - \log(1/p_j(x_j)) < A_{i_{j-1}, i_j} + \epsilon.
\label{eq:bad-cycle}
\end{equation}
Such values exist by the definition of $A_{i, j}$ as an infimum
in \cref{eq:difference-constraints-matrix}.
Because state $s_{i_0}$ is reachable from $s_0$ by \cref{def:online-random-sampling}, we may
let $(q_1,\ldots,q_l)$, $(y_1,\ldots,y_l)$, $(\mathbf{b}_1,\ldots,\mathbf{b}_l)$,
and $(r_0=s_0, r_1, \ldots, r_{l-1}, r_l=s_{i_0})$ be sequences of
distributions, samples, input bit sequences, and states, respectively, such that
$f(q_j, r_{j-1}, \mathbf{b}_j) = (y_j, r_j)$ for each $j \in \set{1,\ldots,l}$,
representing one possible sequence of transitions from $s_0$ to $s_{i_0}$.

Then, for any $n\in\N$, when the sampler is run with an input bit stream
whose prefix is
$\mathbf{b}_1\cdots\mathbf{b}_l (\mathbf{c}_1\cdots\mathbf{c}_k)^n$
and a target distribution sequence whose prefix is $q_1 \cdots q_l (p_1 \cdots p_k)^n$,
it produces output symbols $(y_1 \cdots y_l) (x_1 \cdots x_k)^n$.
Define
\begin{equation}
N \defeq 1 + \left\lfloor \frac{\sum_{j=1}^l \abs*{\mathbf{b}_j}}
                               {\sum_{j=1}^k (\log(1/p_j(x_j)) - \abs{\mathbf{c}_j})} \right\rfloor
> \frac{\sum_{j=1}^l \abs{\mathbf{b}_j}}
       {\sum_{j=1}^k (\log(1/p_j(x_j)) - \abs{\mathbf{c}_j})},
\end{equation}
which is positive by \cref{eq:epsilon,eq:bad-cycle}.

When running the sampler on the target distribution sequence
$q_1 \cdots q_l (p_1 \cdots p_k)^N$ using uniform random bits,
the probability of generating the sequence $(y_1 \cdots y_l) (x_1 \cdots x_k)^N$
is at least
\begin{align}
2^{-\abs*{\mathbf{b}_1\cdots\mathbf{b}_l (\mathbf{c}_1\cdots\mathbf{c}_k)^N}}
&= 2^{-\sum_{j=1}^l \abs*{\mathbf{b}_j} - N \sum_{j=1}^k \abs{\mathbf{c}_j}} \\
&> 2^{-N \sum_{j=1}^k \log(1/p_j(x_j))} \\
&= (p_1(x_1)\cdots p_k(x_k))^N \\
&\geq (q_1(y_1)\cdots q_l(y_l)) (p_1(x_1)\cdots p_k(x_k))^N.
\end{align}
But generating the output $(y_1 \cdots y_l) (x_1 \cdots x_k)^N$
from the distributions $q_1 \cdots q_l (p_1 \cdots p_k)^N$
with probability greater than $(q_1(y_1)\cdots q_l(y_l)) (p_1(x_1)\cdots p_k(x_k))^N$
violates the exactness condition~\cref{eq:online-random-sampling-exact}.
\end{proof}

One last property we will need about the state information content is that it is nonnegative.
This result also verifies that every $A_{i,j} > -\infty$.
The proof is very similar to that of \cref{theorem:no-negative-cycles}.

\begin{lemma}[Nonnegativity of state information content]
\label{lemma:state-information-content-nonnegative}
Let $G$ be the directed graph with vertex set $\AState$
and edge weights $A_{i, j}$ between each pair of vertices $s_i, s_j$.
For each $s_i \in \AState$, the shortest paths from $s_0$ to $s_i$ have nonnegative weight,
i.e., $\delta(s_0, s_i) \geq 0$.
\end{lemma}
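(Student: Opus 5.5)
The plan is to argue by contradiction, following the proof of \cref{theorem:no-negative-cycles} closely, with a path from $s_0$ to $s_i$ in place of a cycle. Suppose some path $(s_0 = t_0, t_1, \ldots, t_k = s_i)$ in $G$ has total weight $A_{t_0,t_1} + \cdots + A_{t_{k-1},t_k} < 0$. This covers both $\delta(s_0,s_i) < 0$ and $\delta(s_0,s_i) = -\infty$; the latter includes the case where some edge weight is $-\infty$, so the conclusion also shows every $A_{i,j} > -\infty$ on reachable edges.

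Because each $A_{t_{j-1},t_j}$ is an infimum, we can choose witnesses $p_j \in \DAlphabet$, $x_j \in \N$ and $\mathbf{c}_j$ with $f(p_j, t_{j-1}, \mathbf{c}_j) = (x_j, t_j)$. We choose them so that the sum $\sum_{j=1}^k (\abs{\mathbf{c}_j} - \log(1/p_j(x_j)))$ is still strictly negative. This is possible by taking each witness within $\eta/k$ of its infimum, where $\eta$ is half the magnitude of the negative total. If the total is $-\infty$, we instead pick witnesses making the sum any fixed negative number. An edge whose infimum is over the empty set has weight $+\infty$ and cannot lie on a negative path, so every edge used has at least one witness.

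Now run the sampler from $s_0$ on the target sequence $p_1, \ldots, p_k$, feeding the input prefix $\mathbf{c}_1 \cdots \mathbf{c}_k$. By construction it emits $x_1, \ldots, x_k$. The left-hand side of the exactness condition \cref{eq:online-random-sampling-exact} for $n = k$ is a sum of nonnegative terms, one of which is $2^{-\sum_j \abs{\mathbf{c}_j}}$. Negativity of the sum gives
\begin{equation}
2^{-\sum_{j=1}^k \abs{\mathbf{c}_j}} > 2^{-\sum_{j=1}^k \log(1/p_j(x_j))} = \prod_{j=1}^k p_j(x_j).
\end{equation}
So the sampler outputs $x_1, \ldots, x_k$ with probability strictly greater than $\prod_j p_j(x_j)$, contradicting exactness. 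Unlike in \cref{theorem:no-negative-cycles}, the path already starts at $s_0$, so no prefix path or repetition count $N$ is needed.

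The only delicate step is the first one. We must handle $\delta(s_0,s_i)$ as an infimum over possibly infinitely many paths, which could equal $-\infty$, together with infinite edge weights. Reducing everything to the existence of a single finite path with strictly negative weight, and then to concrete witnesses, handles both issues. Once that is done, the inequality is immediate.
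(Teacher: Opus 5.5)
Your proposal is correct and follows essentially the same argument as the paper: choose near-infimal witnesses along a negative-weight path starting at $s_0$ so that the total information loss is strictly negative, and observe that the single input string $\mathbf{c}_1\cdots\mathbf{c}_k$ then produces $x_1\cdots x_k$ with probability exceeding $\prod_j p_j(x_j)$, contradicting exactness with no prefix path or repetition count needed. Your extra care with the cases $\delta(s_0,s_i)=-\infty$ and infinite edge weights is a small refinement that the paper glosses over, since it assumes a shortest path of finite negative weight exists when it sets $\epsilon$.
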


\begin{proof}
Assume, for the sake of contradiction, that there is a state $s \in \AState$
such that $\delta(s_0, s) < 0$.
Let $(s_0, s_{i_1}, s_{i_2}, \ldots, s_{i_k} = s)$ be a shortest path from $s_0$ to $s$ in $G$,
so that
\begin{align}
A_{0, i_1} + A_{i_1, i_2} + \cdots + A_{i_{k-1}, i_k} = \delta(s_0, s) < 0.
\end{align}
Define $\epsilon \defeq - (A_{0, i_1} + A_{i_1, i_2} + \cdots + A_{i_{k-1}, i_k}) / k$.
Then, for each $j \in \set{1, 2, \ldots, k}$, there exist $p_j \in \DAlphabet$, $x_j \in \N$,
and $\mathbf{c}_j \in \set{0,1}^*$ such that
$f(p_j, s_{i_{j-1}}, \mathbf{c}_j) = (x_j, s_{i_j})$ and
$\abs{\mathbf{c}_j} - \log(1/p_j(x_j)) < A_{i_{j-1}, i_j} + \epsilon$,
by the definition of $A_{i, j}$.
Then, when running the sampler on the target distribution sequence
$p_1 \cdots p_k$ using uniform random bits,
the probability of generating the sequence $x_1 \cdots x_k$
is at least
\begin{align}
2^{-\abs*{\mathbf{c}_1\cdots\mathbf{c}_k}} > 2^{-\sum_{j=1}^k \log(1/p_j(x_j))}= p_1(x_1)\cdots p_k(x_k),
\end{align}
by our choice of $\epsilon$, $p_j$, $x_j$, and $\mathbf{c}_j$.
But generating the output $x_1 \cdots x_k$
from the distributions $p_1 \cdots p_k$
with probability greater than $p_1(x_1)\cdots p_k(x_k)$
violates the exactness condition~\cref{eq:online-random-sampling-exact}.
\end{proof}

\Cref{theorem:difference-constraints-feasibility,
      theorem:no-negative-cycles,
        lemma:state-information-content-nonnegative}
imply the following result.

\begin{corollary}
\label{corollary:state-information-content-exists}
For any online random sampler, there exists a state information content function
as defined in \cref{def:state-information-content}.
\end{corollary}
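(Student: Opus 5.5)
We construct the function explicitly as shortest-path distances in the graph $G$ of \cref{theorem:difference-constraints-feasibility}, that is, $h(s_i) \defeq \delta(s_0, s_i)$. We then check the three requirements of \cref{def:state-information-content}: $h$ is well defined and real-valued, $h \geq 0$ with $h(s_0)=0$, and $h$ satisfies the transition inequality.

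\emph{Well-definedness.} By \cref{theorem:no-negative-cycles}, $G$ has no negative-weight cycles, so shortest-path weights are not $-\infty$. They are also finite from above, because every state is reachable from $s_0$ (the remark following \cref{def:online-random-sampling}). Each transition realized along such a reaching sequence makes the corresponding infimum $A_{i,j}$ finite, since its defining set is nonempty. So there is a path of finite weight from $s_0$ to each $s_i$, and $\delta(s_0,s_i) < +\infty$.

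If $\AState$ is infinite, \cref{theorem:difference-constraints-feasibility} as cited is stated for finite graphs. In that case we do not appeal to Bellman--Ford but define $\delta(s_0,s_i)$ directly as the infimum of weights over finite walks from $s_0$ to $s_i$. This infimum is at least $0$ by the argument of \cref{lemma:state-information-content-nonnegative}: that argument applies to any finite walk, so the infimum is finite. The triangle inequality
\begin{equation}
\delta(s_0,s_j) \le \delta(s_0,s_i) + A_{i,j}
\end{equation}
then holds immediately, because appending the edge $(s_i,s_j)$ to any walk to $s_i$ gives a walk to $s_j$.

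\emph{Nonnegativity and normalization.} \Cref{lemma:state-information-content-nonnegative} gives $h(s_i) \geq 0$. For $h(s_0)=0$: the empty walk gives $\delta(s_0,s_0) \le 0$, and nonnegativity gives $\delta(s_0,s_0) \ge 0$.

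\emph{Transition inequality.} Suppose $f(p,s_i,\mathbf{c}) = (x,s_j)$ with $x \in \supp(p)$. By the definition of $A_{i,j}$ as an infimum, $A_{i,j} \le \abs{\mathbf{c}} - \log(1/p(x))$. Combining this with the triangle inequality above yields
\begin{equation}
h(s_j) + \log\frac{1}{p(x)} \le h(s_i) + \abs{\mathbf{c}},
\end{equation}
which is exactly the condition in \cref{def:state-information-content}.

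The only delicate point is the possibly infinite state space, where \cref{theorem:difference-constraints-feasibility} does not literally apply. This is handled by the infimum-over-walks definition, which relies only on the walk-based proofs of \cref{theorem:no-negative-cycles} and \cref{lemma:state-information-content-nonnegative}.
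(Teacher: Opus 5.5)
Your proof is correct and takes the same route as the paper: the paper's proof is the single sentence that \cref{theorem:difference-constraints-feasibility}, \cref{theorem:no-negative-cycles}, and \cref{lemma:state-information-content-nonnegative} together give the feasible choice $h = \delta(s_0,\cdot)$, and you check each requirement of \cref{def:state-information-content} for that same $h$. Your infimum-over-walks definition is a modest refinement that also covers infinite $\AState$, where the cited finite-graph feasibility result and the existence of a shortest path are not automatic; in that formulation the walk version of the nonnegativity argument alone already supplies $0 \le \delta(s_0,s_i) < \infty$, so the appeal to \cref{theorem:no-negative-cycles} becomes unnecessary.
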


\section{State Space Cycles and Information Loss}
\label{sec:cycles}

The state information content function lets us prove lower bounds on the
state space required to achieve a given entropy loss bound $\eps$.
In particular, generating any output sample with probability less than $1$
requires transitioning from one state to another
and/or consuming at least one fresh random bit from the input source.
For low-entropy target distributions, each transition must therefore lose information,
unless the fractional parts of the possible state information content values are sufficiently
dense in $\R/\Z$.
Two lemmas and a key theorem formalize these ideas.

\begin{lemma}[Information loss over cycles]
\label{lemma:cycle-loss}
Let $(s_{i_0}, s_{i_1}, \ldots, s_{i_k}=s_{i_0})$ be any cycle of states in $\AState$,
and let distributions $p_1, \ldots, p_k$, outputs $x_1, \ldots, x_k$, and
input bit sequences $\mathbf{c}_1, \ldots, \mathbf{c}_k$
satisfy $f(p_j, s_{i_{j-1}}, \mathbf{c}_j) = (x_j, s_{i_j})$ for each $j \in \set{1, 2, \ldots, k}$.
Then, the information lost over the course of iterating through the cycle once
can be lower bounded as follows:
\begin{equation}
\sum_{j=1}^k (\abs{\mathbf{c}_j} - \log(1/p_j(x_j))) \geq
\modZ*{- \sum_{j=1}^k \log(1/p_j(x_j))}.
\qedhere
\end{equation}
\end{lemma}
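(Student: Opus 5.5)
The plan is to combine two facts: the total loss around the cycle is nonnegative, and it differs from $\sum_j -\log(1/p_j(x_j))$ by an integer. Write $L \defeq \sum_{j=1}^k (\abs{\mathbf{c}_j} - \log(1/p_j(x_j)))$ and $C \defeq \sum_{j=1}^k \abs{\mathbf{c}_j} \in \N$. Then
\begin{equation}
L = C + \left(-\sum_{j=1}^k \log(1/p_j(x_j))\right),
\end{equation}
so $L$ and $-\sum_j \log(1/p_j(x_j))$ have the same fractional part. In particular $\modZ{L} = \modZ*{-\sum_j \log(1/p_j(x_j))}$.

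Next I will show $L \geq 0$. By \cref{corollary:state-information-content-exists}, a state information content function $h$ exists. Each transition $f(p_j, s_{i_{j-1}}, \mathbf{c}_j) = (x_j, s_{i_j})$ is a genuine transition with $x_j \in \supp(p_j)$; otherwise $\log(1/p_j(x_j)) = \infty$ and the claim is trivial, or the transition is excluded by exactness. Hence \cref{def:state-information-content} gives
\begin{equation}
h(s_{i_j}) - h(s_{i_{j-1}}) \leq \abs{\mathbf{c}_j} - \log(1/p_j(x_j)).
\end{equation}
Summing over $j=1,\dots,k$, the left-hand side telescopes to $h(s_{i_k}) - h(s_{i_0}) = 0$ because the path is a cycle. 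Therefore $L \geq 0$.

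To finish, note that for any real $L \geq 0$ we have $L \geq L - \floor{L} = \modZ{L}$, since $\floor{L} \geq 0$. Combining this with the previous two steps gives
\begin{equation}
L \geq \modZ*{-\sum_{j=1}^k \log(1/p_j(x_j))},
\end{equation}
which is the claimed bound.

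No step here is really an obstacle. The only point needing care is that all $\log(1/p_j(x_j))$ are finite, which holds because $x_j \in \supp(p_j)$ is forced by exactness. The nonnegativity $L \geq 0$ can alternatively be read off directly from \cref{theorem:no-negative-cycles}, since $L$ is at least the weight $\sum_j A_{i_{j-1},i_j}$ of the cycle in $G$.
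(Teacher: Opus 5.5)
Your proof is correct and is essentially the paper's argument: nonnegativity of the total loss around the cycle (the paper cites \cref{theorem:no-negative-cycles} directly, which is also what underlies the $h$ you telescope), combined with integrality of $\sum_j \abs{\mathbf{c}_j}$. Your shift-invariance of $\modZ{\cdot}$ is just a rephrasing of the paper's $x \ge \ceil{y}$ step. One small correction: if $p_j(x_j)=0$ the claim would not be ``trivial'' (the left side would be $-\infty$), but as you go on to say, exactness together with reachability of every state rules out such transitions, so the argument is unaffected.
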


\begin{proof}
The string lengths $\abs{\mathbf{c}_j}$ are integers
and the sum on the left-hand side is nonnegative by \cref{theorem:no-negative-cycles}.
The result follows from the fact that if $x - y \ge 0$ where $x$ is an integer,
then $x \ge \ceil{y}$, so that
$x - y \ge \ceil{y} - y = -\floor{-y} - y \eqdef \modZ{-y}$.
\end{proof}

\begin{lemma}[State space information loss]
\label{lemma:state-space-loss}
When using an online random sampler to generate repeated samples
from a target distribution $p$,
the information lost over $\abs{\AState}$ samples is at least the
minimum loss incurred from iterating through a cycle of states in $\AState$:
\begin{equation}
\sum_{j=1}^{\abs{\AState}} (\abs{\mathbf{c}_j} - \log(1/p(x_j))) \geq
  \begin{aligned}[t]
    &h(s_{i_{\abs{\AState}}}) - h(s_{i_0}) \\
    &+ \inf\set*{
      \modZ*{\,-\sum_{\xinsupp} a_x \log(1/p(x))}
      \; \middle| \;
      a_x \in \N,
      1 \leq \sum_{\xinsupp} a_x \leq \abs{\AState}
      }
  \end{aligned}
\end{equation}
for every sequence of states $s_{i_0}, s_{i_1}, \ldots, s_{i_{\abs{\AState}}}$,
outputs $x_1, \ldots, x_{\abs{\AState}}$,
and input bit sequences $\mathbf{c}_1, \ldots, \mathbf{c}_{\abs{\AState}}$,
such that $f(p, s_{i_{j-1}}, \mathbf{c}_j) = (x_j, s_{i_j})$
for each $j \in \set{1, 2, \ldots, \abs{\AState}}$.
Here, $h$ is a state information content function for $f$.
\end{lemma}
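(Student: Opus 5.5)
The plan is a pigeonhole argument on the sequence of states, followed by splitting the walk into a cycle plus a shorter walk. Fix the walk $s_{i_0}, s_{i_1}, \ldots, s_{i_{\abs{\AState}}}$. It visits $\abs{\AState}+1$ states, so two indices $0 \le a < b \le \abs{\AState}$ satisfy $s_{i_a} = s_{i_b}$. Split the total information loss into two parts: the segment $j \in \set{a+1,\ldots,b}$, which traverses a cycle, and the remaining transitions $j \in \set{1,\ldots,a} \cup \set{b+1,\ldots,\abs{\AState}}$.

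For the remaining transitions, apply \cref{def:state-information-content} to each transition separately. This gives $\abs{\mathbf{c}_j} - \log(1/p(x_j)) \ge h(s_{i_j}) - h(s_{i_{j-1}})$. Summing over $j \le a$ telescopes to $h(s_{i_a}) - h(s_{i_0})$. Summing over $j > b$ telescopes to $h(s_{i_{\abs{\AState}}}) - h(s_{i_b})$. Since $s_{i_a} = s_{i_b}$, these two contributions add to exactly $h(s_{i_{\abs{\AState}}}) - h(s_{i_0})$. This is the first term of the claimed bound.

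For the cycle segment, apply \cref{lemma:cycle-loss} to the cycle $s_{i_a}, \ldots, s_{i_b}$ with all distributions equal to $p$. The loss over the segment is then at least $\modZ{-\sum_{j=a+1}^{b} \log(1/p(x_j))}$. Group this sum by outcome: set $a_x \defeq \abs{\set{j \in (a,b] : x_j = x}}$, so the sum becomes $\sum_{x} a_x \log(1/p(x))$. The counts satisfy $\sum_x a_x = b - a \in [1, \abs{\AState}]$. Each $x_j$ lies in $\supp(p)$, since any output $x$ with $p(x)=0$ has probability zero by exactness and so contributes no paths; outputs in the walk are implicitly in the support. Hence the fractional part is at least the stated infimum, and adding the two parts gives the lemma.

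No step is a serious obstacle. The one detail to check is that the telescoping across the gap uses only the per-edge inequality from \cref{def:state-information-content}, which holds for every transition, and does not require that these outside segments themselves form cycles. The support remark also needs a brief justification: for $p(x)=0$ the loss term $\log(1/p(x))$ would be infinite, and exactness rules out such transitions from occurring along a walk.
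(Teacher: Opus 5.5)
Your proposal is correct and follows the paper's own argument: pigeonhole to find a repeated state $s_{i_a}=s_{i_b}$, telescope the per-transition inequality for $h$ over the initial and final segments, and apply \cref{lemma:cycle-loss} to the cycle, with the count vector $a_x$ as a feasible point of the infimum. Your remark that each $x_j$ lies in $\supp(p)$ (by exactness together with reachability of every state) fills in a detail the paper leaves implicit.
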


\begin{proof}
By the pigeonhole principle, at least one state must be repeated among
$s_{i_0}, s_{i_1}, \ldots, s_{i_{\abs{\AState}}}$,
so we may take $L < R$ satisfying $s_{i_L} = s_{i_R}$.
Then, we can decompose the sequence of states into
an initial path from $s_{i_0}$ to $s_{i_L}$,
followed by a cycle from $s_{i_L}$ to itself (at index $i_R$),
followed by a final path from $s_{i_R}$ to $s_{i_{\abs{\AState}}}$.

Over the initial and final paths, the information difference can be bounded
by the state information content function $h$:
\begin{equation}
\sum_{j=1}^L (\abs{\mathbf{c}_j} - \log(1/p(x_j)))
\geq \sum_{j=1}^L (h(s_{i_j}) - h(s_{i_{j-1}})) = h(s_{i_L}) - h(s_{i_0}),
\end{equation}
and similarly
\begin{equation}
\sum_{j=R+1}^{\abs{\AState}} (\abs{\mathbf{c}_j} - \log(1/p(x_j)))
\geq h(s_{i_{\abs{\AState}}}) - h(s_{i_R}).
\end{equation}
The information difference over the cycle is bounded by \cref{lemma:cycle-loss}:
\begin{align}
\sum_{j=L+1}^R (\abs{\mathbf{c}_j} - \log(1/p(x_j)))
  &\geq \modZ*{- \sum_{j=L+1}^R \log(1/p(x_j))} \\
&\geq
   \inf\set*{
      \modZ*{\,-\sum_{\xinsupp} a_x \log(1/p(x))}
      \; \middle| \;
      a_x \in \N,
      1 \leq \sum_{\xinsupp} a_x \leq \abs{\AState}},
\end{align}
where the second inequality follows because the count vector
$a_x = \abs{\set{j \in \set{L+1, \ldots, R} \mid x_j = x}}$
for each $x \in \supp(p)$
is feasible for the infimum.
Combining these three bounds yields the result:
\begin{align}
~      & \sum_{j=1}^{\abs{\AState}} (\abs{\mathbf{c}_j} - \log(1/p(x_j))) \\
={}    & \sum_{j=1}^L (\abs{\mathbf{c}_j} - \log(1/p(x_j))) + \sum_{j=L+1}^R (\abs{\mathbf{c}_j} - \log(1/p(x_j))) + \sum_{j=R+1}^{\abs{\AState}} (\abs{\mathbf{c}_j} - \log(1/p(x_j))) \\
\geq{} & h(s_{i_L}) - h(s_{i_0}) + \modZ*{- \sum_{j=L+1}^R \log(1/p(x_j))} + h(s_{i_{\abs{\AState}}}) - h(s_{i_R}) \\
\geq{} & h(s_{i_{\abs{\AState}}}) - h(s_{i_0})
   + \inf\set*{
      \modZ*{\,-\sum_{\xinsupp} a_x \log(1/p(x))}
      \; \middle| \;
      a_x \in \N,
      1 \leq \sum_{\xinsupp} a_x \leq \abs{\AState}}.
\end{align}
\end{proof}

\begin{theorem}
\label{theorem:space-bound-gamma}
For a discrete probability distribution $p$ and an integer $M \ge 1$, define
\begin{equation}
\Gamma_p(M) \defeq
            \inf\set*{
              \modZ*{\,-\sum_{\xinsupp} a_x \log(1/p(x))}
              \;\middle|\;
              a_x \in \N,
              1 \leq \sum_{\xinsupp} a_x \leq M}.
\label{eq:gamma-p}
\end{equation}
Let $f$ be an online random sampler with state space $\AState$,
and $p \in \DAlphabet$.
If $f$ achieves entropy loss at most $\eps$ bits per sample, then
\begin{equation}
\abs{\AState} \ge \frac{\Gamma_p(\abs{\AState})}{\eps}.
\qedhere
\end{equation}
\end{theorem}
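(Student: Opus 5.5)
We run the sampler on the constant sequence $p, p, p, \dots$ and split the first $n = m\abs{\AState}$ rounds into $m$ consecutive blocks of $\abs{\AState}$ samples each. Then we apply \cref{lemma:state-space-loss} to every block along every execution path, sum over blocks, and take expectations.

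Fix a path of states $s_0 = t_0, t_1, \ldots, t_n$ generated by uniform input bits. Block $b$ covers rounds $(b-1)\abs{\AState}+1$ through $b\abs{\AState}$. By \cref{lemma:state-space-loss}, the information loss of block $b$ is at least
\begin{equation}
h(t_{b\abs{\AState}}) - h(t_{(b-1)\abs{\AState}}) + \Gamma_p(\abs{\AState}).
\end{equation}
Summing over $b = 1, \ldots, m$ telescopes the $h$ terms. The total information loss along the path is therefore at least
\begin{equation}
h(t_n) - h(s_0) + m\,\Gamma_p(\abs{\AState}) \geq m\,\Gamma_p(\abs{\AState}).
\end{equation}
This step uses $h(s_0)=0$ and $h \geq 0$ (\cref{corollary:state-information-content-exists}, \cref{lemma:state-information-content-nonnegative}). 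A state information content function exists by \cref{corollary:state-information-content-exists}, so the bound applies pathwise.

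The entropy loss in \cref{def:information-loss} is the expectation of the information loss over paths weighted by $2^{-\sum\abs{\mathbf{c}_i}}$. By the exhaustive and prefix-free conditions, these weights sum to $1$. Taking the expectation of the pathwise bound, the entropy loss over $n$ samples is at least $(n/\abs{\AState})\,\Gamma_p(\abs{\AState})$ when $n$ is a multiple of $\abs{\AState}$. The hypothesis says this entropy loss is at most $\eps n + o(n)$. Dividing by $n$ and letting $n \to \infty$ along multiples of $\abs{\AState}$ gives $\Gamma_p(\abs{\AState})/\abs{\AState} \leq \eps$, which rearranges to the claim.

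The main technical obstacle is making the expectation rigorous, because paths are infinite objects built from a partial function. The plan is to state the bound directly for the finite sums in \cref{def:information-loss}. The sum there ranges over tuples $(s_i, x_i, \mathbf{c}_i)_{i\le n}$ with weights $2^{-\sum \abs{\mathbf{c}_i}}$. That these weights sum to $1$ follows by induction on $n$ from the exhaustive condition applied at each state. A second point to handle is the case $\abs{\AState} = \infty$: there the bound is vacuous, since $\Gamma_p \leq 1$, so we may assume $\AState$ is finite.
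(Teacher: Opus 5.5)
Your proposal is correct and follows the paper's proof: you split $m\abs{\AState}$ rounds into blocks, apply \cref{lemma:state-space-loss} to each block, telescope using $h(s_0)=0$ and $h\ge 0$, take expectations, and let $m\to\infty$. Your extra remarks only make explicit what the paper leaves under ``taking expectations'': the path weights sum to $1$ by induction from the exhaustive condition, and the infinite-state case is trivial since $\Gamma_p\le 1$.
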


\begin{proof}
Run the sampler to generate $n\abs{\AState}$ i.i.d.~samples from $p$.
Partition these samples into $n$ consecutive blocks of $\abs{\AState}$ samples each.
For each block $b \in \set{0,\dots,n-1}$ starting at state $s_{b\abs{\AState}}$ and ending at
$s_{(b+1)\abs{\AState}}$, \cref{lemma:state-space-loss} gives
\begin{equation}
\sum_{j=b\abs{\AState}+1}^{(b+1)\abs{\AState}}\left(\abs{\mathbf{c}_j} - \log(1/p(x_j))\right)
   \ge h(s_{(b+1)\abs{\AState}}) - h(s_{b\abs{\AState}}) + \Gamma_p(\abs{\AState}).
\end{equation}
Summing over the blocks $b=0,\dots,n-1$, the $h$ terms telescope as
\begin{equation}
\sum_{j=1}^{n \abs{\AState}}\left(\abs{\mathbf{c}_j} - \log(1/p(x_j))\right)
   \ge h(s_{n\abs{\AState}}) - h(s_0) + n\Gamma_p(\abs{\AState}) \ge n \Gamma_p(\abs{\AState}),
\end{equation}
where the second inequality follows from the fact that $h(s_0) = 0$ and $h(s_{n\abs{\AState}}) \ge 0$.
Taking expectations gives entropy loss at least $n \Gamma_p(\abs{\AState})$
for generating $n\abs{\AState}$ samples.
By the assumption on the entropy loss we have $n \Gamma_p(\abs{\AState}) \le \eps n\abs{\AState} + o(n\abs{\AState})$.
Dividing by $\eps n$ and taking $n \to \infty$ gives $\Gamma_p(\abs{\AState})/\eps \le \abs{\AState}.$
\end{proof}

\section{Lower Bound for Bernoulli Outputs}
\label{sec:bernoulli}

We now apply \cref{theorem:space-bound-gamma} to prove
\cref{theorem:space-lb}.

\begin{nicebox}
\getkeytheorem{theorem:space-lb}
\end{nicebox}

\begin{proof}
To establish \zcref[noname]{item:state-space-lower-bounds-1},
consider the target distribution $p \defeq \mathrm{Bernoulli}(1/3)$.
The fractional part of the negative information content of each outcome is
$\modZ{-\log(3)} = \modZ{-\log(3/2)} = 2-\log(3)$.
A useful result from number theory, discussed further in \cref{sec:bernoulli-factors},
is that
\begin{align}
\min_{1 \leq a \leq M} \modZ*{-a \log 3}
\geq M^{-4.116201 - o(1)},
\label{eq:apply-diophantine-approx}
\end{align}
where the $o(1)$ term goes to $0$ as $M \to \infty$.
Therefore, \cref{theorem:space-bound-gamma} gives the result:
\begin{align}
\abs{\AState}
   \ge \frac{\Gamma_p(\abs{\AState})}{\eps}
   \ge \frac{\abs{\AState}^{-4.116201 - o(1)}}{\eps}
\quad\implies\quad
\log(\abs{\AState})
   \ge \left(\frac{1}{5.116201} - o(1)\right) \log\frac{1}{\eps},
\end{align}
where the $o(1)$ term now goes to $0$ as $\eps \to 0$.

To establish \zcref[noname]{item:state-space-lower-bounds-2},
consider the target distribution
$p_m \defeq \mathrm{Bernoulli}(1/(2^m+1))$ for some $m \in \N$.
The fractional part of the negative information content of each outcome is
$\modZ*{-\log(2^m+1)} = \modZ*{-\log(1+2^{-m})} = 1 - \log(1+2^{-m}) \eqdef 1 - \alpha_m$.
For any $M \geq 1$, we have
\begin{equation}
\Gamma_{p_m}(M)
= \min_{1 \le a \le M}\modZ*{-a\thinspace \alpha_m}
\geq \min_{1 \le a \le M}(1 - a\thinspace \alpha_m)
= 1 - M\alpha_m.
\end{equation}
Let $\mathcal{M} \subset \N$ denote the set of all $m$
for which $p_m \in \DAlphabet$.
There exist infinitely many such $m$ by assumption,
so taking $m \to \infty$ along the subsequence ensures that $\alpha_m \to 0$.
Therefore, \cref{theorem:space-bound-gamma} gives the reuslt:
\begin{align}
\abs{\AState}
   \ge \frac{\Gamma_{p_m}(\abs{\AState})}{\eps}
   \ge \frac{1 - \abs{\AState}\thinspace \alpha_m}{\eps}
\quad\implies\quad
\abs{\AState}
   \ge \sup_{m \in \mathcal{M}} \frac{1 - \abs{\AState}\thinspace \alpha_m}{\eps}
   = \frac{1}{\eps}.
\end{align}
\end{proof}

\subsection{Constant Factors and Diophantine Approximation}
\label{sec:bernoulli-factors}

The bound in \zcref[noname]{item:state-space-lower-bounds-1} of \cref{theorem:space-lb}
makes use of the fact \cref{eq:apply-diophantine-approx} that integer multiples of $\log(3)$ cannot be ``too close'' to integers,
or equivalently, that $\log(3)$ cannot be approximated too well
by rational numbers with small denominators.
More precisely, \citeauthor{baker1975}'s theorem on linear forms in logarithms \citep[Theorem 3.1]{baker1975}
shows that there exists a constant $c > 0$ such that
\begin{align}
\modZ{-a\log(3)} \geq \frac{1}{a^c}
&&
(a = 1, 2, \dots).
\end{align}
Approximation by rational numbers, also called Diophantine approximation,
is an important topic studied in number theory.
The hardness of approximation of a single number (resp.~a set of numbers)
is measured by the irrationality exponent (resp.~the linear independence measure),
which we define following the exposition of \citet{wu2014}.

\begin{definition}
\label{def:irrationality-exponent}
The \textit{irrationality exponent} of an irrational real number $\alpha$, denoted $\mu(\alpha)$,
is the minimum real number such that, for every $\epsilon > 0$,
there exists $q_0(\epsilon)$ such that
\begin{equation}
\abs*{\alpha - \frac{p}{q}} \geq \frac{1}{q^{\mu(\alpha) + \epsilon}}
\end{equation}
for all integers $p$ and $q$ with $q > q_0(\epsilon)$.
\end{definition}

\begin{definition}
\label{def:linear-independence-measure}
The \textit{linear independence measure} of a set of $\Q$-linearly independent
real numbers $\alpha_0, \ldots, \alpha_k$, denoted $\nu(\alpha_0, \ldots, \alpha_k)$,
is the minimum real number such that, for every $\epsilon > 0$,
there exists $H_0(\epsilon)$ such that
\begin{equation}
\abs*{p\alpha_0 + q_1\alpha_1 + \cdots + q_k\alpha_k} \geq H^{-\nu(\alpha_0, \ldots, \alpha_k) - \epsilon}
\end{equation}
for all integers $p, q_1, \ldots, q_k$ satisfying $H > H_0(\epsilon)$,
where $H \defeq \max\set{\abs{q_1},\ldots,\abs{q_k}}$.
\end{definition}

\begin{remark}
\label{remark:modZ-bound-from-linear-independence-measure}
An immediate consequence of the definition of a linear independence measure is that
\begin{equation}
\modZ*{q_1\alpha_1 + \cdots + q_k\alpha_k}
\geq \max\set{\abs{q_1},\ldots,\abs{q_k}}^{-\nu(1, \alpha_1, \ldots, \alpha_k) - o(1)}
\end{equation}
where $o(1)$ describes the limiting behavior as
$\max\set{\abs{q_1},\ldots,\abs{q_k}} \to \infty$.
\end{remark}

Although \citet{baker1975} provided the first general proof that
sets of logarithms of algebraic numbers have finite linear independence measures,
the constants were far from optimal.
Extensive work has been devoted to improving the constant bounds for specific numbers,
and the current best known bound on the irrationality exponent of $\log(3)$ is
$\mu(\log(3)) \leq 5.116201$, which is a corollary of the stronger result
$\nu(1, \ln(2), \ln(3)) \leq 4.116201$ \citep[Theorem 1]{bondareva2018}.
This bound implies \cref{eq:apply-diophantine-approx},
which yields the space lower bound of $(1/5.116201 - o(1))\log(1/\eps)$ bits
when sampling from $\mathrm{Bernoulli}(1/3)$.

\section{Lower Bound for i.i.d.~Outputs}
\label{sec:iid}

We now generalize the Diophantine approximation analysis
to prove the space lower bound in \cref{theorem:space-lb-iid},
which applies to a large class of samplers that emit i.i.d.~outputs
from a fixed distribution.

When considering i.i.d.~samples from a target distribution $p$
where $\set{\modZ{\log p(x)} \mid x \in \supp(p)}$ contains only one element,
it suffices to analyze the irrationality exponent of
the single number $\modZ{\log p(x)}$
to obtain a state-space lower bound.
In general, when $\abs{\set{\modZ{\log p(x)} \mid x \in \supp(p)}} > 1$,
our proof method from \cref{theorem:space-lb}
requires analyzing the linear independence measure
of the set $\set{1} \cup \set{\modZ{\log p(x)} \mid x \in \supp(p)}$.
With a more careful analysis, we can attain lower bounds even if there is
a certain $\Q$-linear dependence among the $\modZ{\log p(x)}$ values,
as long as their $\Q$-linear combinations with nonnegative coefficients
do not include any nonzero integers, as shown in the following theorem.

\begin{theorem}[Space lower bound from linear independence measure]
\label{theorem:space-linear-independence}
Let $p$ be a finite discrete distribution over support $\set{x_1, \ldots, x_k}$.
Let $\set{1, \alpha_1, \ldots, \alpha_m}$ be a $\Q$-linearly independent
set of real numbers and $q \in \N_{>0}$, such that
for each $i \in \set{1,\ldots,k}$ we can write
\begin{equation}
\log p(x_i)
= \beta_i / q + \lambda_{i,1} \alpha_1 + \ldots + \lambda_{i,m} \alpha_m
\end{equation}
for some coefficients $\beta_i \in \Z$ and $\lambda_{i,j} \in \N$ with $\sum_{j=1}^m \lambda_{i,j} > 0$.
If $\set{1, \alpha_1, \ldots, \alpha_m}$ has finite linear independence measure
$C \defeq \nu(1, \alpha_1, \ldots, \alpha_m)$,
then generating i.i.d.~samples from $p$ with an entropy loss bound of $\eps$ bits per sample
requires a state space of size
$\abs{\AState} \geq (1/\eps)^{1/(1+C)-o(1)}$.
\end{theorem}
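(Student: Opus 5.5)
The plan is to apply \cref{theorem:space-bound-gamma} and lower bound $\Gamma_p(M)$ by $M^{-C-o(1)}$. Together these give $\abs{\AState} \ge \abs{\AState}^{-C-o(1)}/\eps$, hence $\abs{\AState}^{1+C+o(1)} \ge 1/\eps$. Since $\abs{\AState}\to\infty$ as $\eps \to 0$ (because $\Gamma_p(M) > 0$ for every fixed $M$), this is exactly $\abs{\AState} \ge (1/\eps)^{1/(1+C)-o(1)}$.

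\textbf{Step 1: rewrite the quantity inside $\Gamma_p(M)$.} Fix $a_{x_i} \eqdef a_i \in \N$ with $1 \le \sum_i a_i \le M$. Then
\begin{equation}
-\sum_i a_i \log(1/p(x_i)) = \sum_i a_i \log p(x_i) = \frac{B}{q} + \sum_{j=1}^m Q_j \alpha_j,
\end{equation}
where $B \defeq \sum_i a_i\beta_i \in \Z$ and $Q_j \defeq \sum_i a_i \lambda_{i,j} \in \N$. The nonnegativity hypotheses matter here. Because $\lambda_{i,j} \ge 0$, $\sum_j \lambda_{i,j} \ge 1$, and not all $a_i$ vanish, we get $\sum_j Q_j \ge 1$. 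So the integer vector $(Q_1,\dots,Q_m)$ is nonzero; this is where no cancellation can occur. On the other side, $H \defeq \max_j Q_j \le M \Lambda$, where $\Lambda \defeq \max_{i,j}\lambda_{i,j}$ is a constant depending only on $p$.

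\textbf{Step 2: lower bound the fractional part.} Write $B = qt + r$ with $0 \le r < q$. Since $\modZ{y} \ge \min_{k \in \Z}\abs{y - k}$, it suffices to bound the distance from $r/q + \sum_j Q_j\alpha_j$ to $\Z$. Suppose $k$ is the nearest integer. Then
\begin{equation}
\abs*{r/q - k + \textstyle\sum_j Q_j \alpha_j} = \tfrac{1}{q}\abs*{(r-qk) + \textstyle\sum_j qQ_j\alpha_j}.
\end{equation}
The right-hand side is $\tfrac1q$ times a linear form in $1,\alpha_1,\dots,\alpha_m$ with integer coefficients $(r-qk, qQ_1,\dots,qQ_m)$, and the $\alpha$-coefficients are not all zero. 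By $\Q$-linear independence this form is nonzero. By \cref{def:linear-independence-measure}, once $qH > H_0(\epsilon)$ its absolute value is at least $(qH)^{-C-\epsilon}$. The finitely many coefficient vectors with $qH \le H_0(\epsilon)$ give only finitely many values of the (nonzero, by independence) distance-to-$\Z$ quantity. They therefore contribute a positive constant, which is eventually dominated. This yields $\Gamma_p(M) \ge \tfrac1q (q\Lambda M)^{-C-\epsilon}$ for all large $M$, that is, $\Gamma_p(M) \ge M^{-C-o(1)}$, since $q$ and $\Lambda$ are constants absorbed into the $o(1)$.

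\textbf{Step 3: combine.} Substitute into \cref{theorem:space-bound-gamma} to get $\eps \abs{\AState}^{1+C+o(1)} \ge 1$. To argue $\abs{\AState}\to\infty$: for a fixed finite $M$, $\Gamma_p(M)$ is a minimum over finitely many $a$ of positive numbers (Step 2 shows each is nonzero), so it is positive. If $\abs{\AState}$ stayed bounded, the inequality $\abs{\AState} \ge \Gamma_p(\abs{\AState})/\eps$ would fail as $\eps\to 0$. Taking logarithms then gives the stated exponent.

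The main obstacle I expect is Step 2, specifically the careful handling of the rational offset $B/q$ and the nonzeroness. The distance to $\Z$ has to be converted into a genuine linear form to which the linear independence measure applies, with $H$ controlled linearly in $M$. The fractional part $\modZ{\cdot}$ is one-sided, but bounding it below by the two-sided distance to $\Z$ handles this. Nonzeroness of $(Q_j)$ relies essentially on the nonnegativity of $\lambda_{i,j}$ and $a_i$.
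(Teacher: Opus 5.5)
Your proposal is correct and follows essentially the same route as the paper: reduce via \cref{theorem:space-bound-gamma} to showing $\Gamma_p(M) \ge M^{-C-o(1)}$, use nonnegativity of the $a_i$ and $\lambda_{i,j}$ to get a nonzero integer coefficient vector of size $O(M)$, clear the denominator $q$, invoke the linear independence measure, and argue $\abs{\AState}\to\infty$. The only cosmetic difference is how you clear $q$: you pass to the nearest integer and apply the definition directly to the integer form $(r-qk)+\sum_j qQ_j\alpha_j$, handling the finitely many small coefficient vectors explicitly, whereas the paper uses $\modZ{y} \ge \modZ{qy}/q$ followed by \cref{remark:modZ-bound-from-linear-independence-measure}.
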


\begin{proof}
\allowdisplaybreaks

By \cref{theorem:space-bound-gamma}, it suffices to show there
exists a function $\delta(M) \to 0$ as $M\to\infty$ such that
\begin{equation}
\Gamma_p(M) \ge M^{-C-\delta(M)}.
\label{eq:gamma-p-bound}
\end{equation}
After setting $M = \abs{\AState}$, \cref{theorem:space-bound-gamma} gives
\begin{align}
\abs{\AState} \ge \frac{\Gamma_p(\abs{\AState})}{\eps} \ge \frac{\abs{\AState}^{-C-\delta(\abs{\AState})}}{\eps}
\quad\implies\quad
\abs{\AState} \ge \left(\frac{1}{\eps}\right)^{1/(C+1+\delta(\abs{\AState}))} \geq \left(\frac{1}{\eps}\right)^{1/(C+1) - o(1)}.
\end{align}
The linear-independence hypotheses imply that $\Gamma_p(M)>0$ for every fixed
$M$ so \cref{theorem:space-bound-gamma} rules out bounded $\abs{\AState}$
as $\eps\to0$.
Therefore $\abs{\AState}\to\infty$ as $\eps\to0$, so
$\delta(\abs{\AState})=o(1)$ in the $\eps\to0$ limit.

We now prove \cref{eq:gamma-p-bound}:
\begin{align}
\Gamma_p(M)
={}    & \min\set*{\modZ*{\sum_{i=1}^k a_i \log p(x_i)}
          \; \middle| \;
          a_i \in \N, \,
          1 \leq \sum_{i=1}^k a_i \leq M} \\
={}    & \min\set*{\modZ*{\sum_{i=1}^k a_i \left(\frac{\beta_i}{q} + \sum_{j=1}^m \lambda_{i,j}\alpha_j \right)}
          \; \middle| \;
          a_i \in \N, \,
          1 \leq \sum_{i=1}^k a_i \leq M} \\
\geq{} & \frac{1}{q} \min\set*{\modZ*{\sum_{i=1}^k a_i q\sum_{j=1}^m \lambda_{i,j}\alpha_j}
          \; \middle| \;
          a_i \in \N, \,
          1 \leq \sum_{i=1}^k a_i \leq M} \label{eq:extract-q} \\
={}    & \frac{1}{q} \min\set*{\modZ*{ \sum_{j=1}^m \left(q\sum_{i=1}^k a_i \lambda_{i,j}\right) \alpha_j}
          \; \middle| \;
          a_i \in \N, \,
          1 \leq \sum_{i=1}^k a_i \leq M} \\
\geq{} & \frac{1}{q} \min\set*{\modZ*{\sum_{j=1}^m b_j \alpha_j}
          \; \middle| \;
          b_j \in \N, \,
          1 \leq \sum_{j=1}^m b_j \leq M q \sum_{i=1}^k \sum_{j=1}^m \lambda_{i,j}} \label{eq:coalesce-coefficients} \\
\geq{} & M^{-C - \delta(M)}. \label{eq:apply-linear-independence-measure}
\end{align}
We can extract the factor of $1/q$ in \cref{eq:extract-q} using the identity
$\modZ{z/q+r} \geq \modZ{z+rq}/q = \modZ{rq}/q$ for $r \in \R$ and $q \in \N_{>0}$ and $z \in \Z$.
We then reduce to \cref{eq:coalesce-coefficients} by reparameterizing
and bounding the new coefficients $b_j$.
Lastly, \cref{eq:apply-linear-independence-measure} follows from
\cref{remark:modZ-bound-from-linear-independence-measure}. The fixed
constants $1/q$ and $q\sum_{i=1}^k\sum_{j=1}^m \lambda_{i,j}$ are absorbed
by increasing $\delta(M)$ by $O(1/\log M)$, which still leaves
$\delta(M) \to 0$.
\end{proof}

The Khintchine-Groshev theorem \citep{groshev1938} shows that
the linear independence measure of almost all tuples of $k$ real numbers (together with $1$) is $k$.
This result nearly suffices to lower-bound the space complexity for i.i.d.~sampling
from almost all finite discrete distributions, except that
the set of probability distributions over $k$ outcomes is a $(k-1)$-dimensional
manifold, which has measure zero in $\R^k$.
Therefore, showing that $\nu(1, \modZ{\log p(x_1)}, \ldots, \modZ{\log p(x_k)}) = k$
for almost all distributions requires a stronger result providing lower bounds
for Diophantine approximation on manifolds.

\begin{theorem}[Diophantine approximation on manifolds \citep[Theorem A]{kleinbock1998}]
\label{theorem:diophantine-approximation-on-manifolds}
Let $U \subseteq \R^d$ be an open set,
and let $\mathbf{f}(U) = \set{\mathbf{f}(\mathbf{u}) \mid \mathbf{u} \in U}$
be a submanifold of $\R^k$.
Assume that for almost all $\mathbf{u} \in U$,
the set of partial derivatives of $\mathbf{f}$ at $\mathbf{u}$
up to some fixed finite order spans $\R^k$.
Then, for almost all $\mathbf{u} \in U$, the numbers
$(1, f_1(\mathbf{u}), \ldots, f_k(\mathbf{u}))$ are $\Q$-linearly independent
with linear independence measure
\begin{equation}
\nu(1, f_1(\mathbf{u}), \ldots, f_k(\mathbf{u})) \leq k,
\end{equation}
where $(f_1(\mathbf{u}), \ldots, f_k(\mathbf{u})) = \mathbf{f}(\mathbf{u})$.
\end{theorem}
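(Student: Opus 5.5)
This statement is the Kleinbock--Margulis theorem, which resolves Sprindžuk's conjecture; the paper cites it rather than proving it. I would reproduce the dynamical proof through homogeneous dynamics.

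\textbf{Reduction to a measure estimate.} The claim $\nu(1,f_1(\mathbf{u}),\ldots,f_k(\mathbf{u}))\le k$ together with $\Q$-linear independence follows from one statement. For each fixed $\epsilon>0$ (it suffices to take $\epsilon=1/n$, a countable union), for almost every $\mathbf{u}$ the inequality
\begin{equation*}
\abs{p + q_1 f_1(\mathbf{u}) + \cdots + q_k f_k(\mathbf{u})} < H^{-k-\epsilon}
\end{equation*}
should have only finitely many integer solutions, where $H=\max_j\abs{q_j}$. A nontrivial rational relation gives infinitely many solutions with left side $0$, by taking integer multiples, so this also yields linear independence. Since the hypotheses are local, I will work on a small ball $B\subseteq U$ around a point where $\mathbf{f}$ is nondegenerate. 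I then bin the solutions dyadically, $2^{t}\le H<2^{t+1}$. By Borel--Cantelli, it suffices to show that the measure $\mu(E_t)$ of the set $E_t\subseteq B$ of $\mathbf{u}$ admitting such a solution at scale $t$ is summable in $t$, for instance $\mu(E_t)\ll 2^{-\gamma t}$ for some $\gamma>0$.

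\textbf{Dani correspondence and quantitative nondivergence.} Let $u_{\mathbf{u}}\in SL_{k+1}(\R)$ be the unipotent matrix sending $(p,\mathbf{q})$ to $(p+\mathbf{q}\cdot\mathbf{f}(\mathbf{u}),\mathbf{q})$. Let $g_t=\mathrm{diag}(2^{kt'},2^{-t'},\ldots,2^{-t'})$, with $t'$ chosen so that a solution at scale $t$ produces a vector of norm $\le 2^{-\gamma' t}$ in the lattice $\Lambda_{\mathbf{u}}=g_t u_{\mathbf{u}}\Z^{k+1}$. So $E_t$ is contained in the set of $\mathbf{u}\in B$ where $\Lambda_{\mathbf{u}}$ has a very short vector. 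I then invoke the Kleinbock--Margulis quantitative nondivergence theorem, which requires two things. First, for every primitive subgroup $\Gamma\subseteq\Z^{k+1}$, the function $\mathbf{u}\mapsto\norm{g_tu_{\mathbf{u}}\Gamma}$ (the covolume, computed in $\bigwedge^{\ast}\R^{k+1}$) must be $(C,\alpha)$-good on a dilate of $B$. Second, its supremum over $B$ must be at least some $\rho>0$ uniformly in $t$ and $\Gamma$. The conclusion is then $\mu(\{\mathbf{u}\in B:\delta(\Lambda_{\mathbf{u}})<r\})\le C'(r/\rho)^{\alpha}\mu(B)$. Setting $r=2^{-\gamma' t}$ gives the summable bound.

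The $(C,\alpha)$-good property is the routine part. The coordinates of $g_tu_{\mathbf{u}}\Gamma$ are linear combinations of $1,f_1,\ldots,f_k$ and of their products arising from wedge powers. On a neighborhood of a nondegenerate point, nondegeneracy makes these combinations behave like polynomials of bounded degree, via the standard lemma that functions with a nonvanishing derivative of bounded order are $(C,1/\ell)$-good.

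\textbf{Main obstacle.} I expect the hard step to be the uniform lower bound $\sup_{B}\norm{g_tu_{\mathbf{u}}\Gamma}\ge\rho$. Here the expanding direction of $g_t$ has to interact with the curvature of $\mathbf{f}$. My first attempt would follow the skew-gradient and exterior-algebra argument. Write the wedge of a basis of $\Gamma$ in a basis adapted to $g_t$. If the component in the expanding direction is small throughout $B$, then certain linear forms in $\mathbf{f}$ and in the gradient $\nabla\mathbf{f}$, with integer coefficients, are small on all of $B$. Nondegeneracy (the derivatives span $\R^k$) forces these forms to be bounded below by a constant depending only on $B$ and $\mathbf{f}$. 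Because the coefficients are integers, they cannot all be tiny, and this contradiction yields $\rho$. Once this is established, Borel--Cantelli gives finiteness for almost every $\mathbf{u}\in B$. Covering $U$ by countably many such balls, and intersecting over $\epsilon=1/n$, completes the proof.
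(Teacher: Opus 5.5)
Your outline is the Kleinbock--Margulis argument, which is exactly what the paper relies on: the paper cites Theorem~A rather than proving it. Your reduction to finiteness of solutions, the choice of $g_t$, the appeal to quantitative nondivergence, and the Borel--Cantelli summation are all correct. Two intermediate steps are misdescribed, however, and this puts the difficulty in the wrong place.

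Write $e_0,\ldots,e_k$ for the standard basis, so that $u_{\mathbf u}e_0=e_0$ and $u_{\mathbf u}e_i=e_i+f_i(\mathbf u)e_0$. Since $u_{\mathbf u}-I$ has image in $\R e_0$ and $e_0\wedge e_0=0$, take $w=\sum_I w_Ie_I\in\bigwedge^j\Z^{k+1}$ representing $\Gamma$. The coordinates of $u_{\mathbf u}w$ are $w_I$ for $0\notin I$, and
\[
w_{0J}+\sum_{i\notin J}\pm w_{J\cup\{i\}}\,f_i(\mathbf u)\qquad\text{for } J\subseteq\{1,\ldots,k\},\ \abs{J}=j-1 .
\]
So the wedge powers produce no products of the $f_i$: every coordinate is an integer affine combination of $1,f_1,\ldots,f_k$. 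That is precisely why the $(C,\alpha)$-good step is routine. The Kleinbock--Margulis lemma makes all such combinations uniformly $(C,\alpha)$-good near a nondegenerate point, for some $\alpha>0$ depending on $d$ and $\ell$ (your $1/\ell$ is the one-variable exponent), and passing to the Euclidean norm only changes $C$. Had products appeared, that lemma would not cover them, and you would need a separate argument.

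The uniform lower bound $\rho$, which you call the main obstacle, is in fact the easy step here, and no gradient enters.
\begin{itemize}
\item Under $g_t$ the vector $e_0\wedge e_J$ is scaled by $2^{(k-j+1)t'}\ge 1$. So $\sup_B\|g_tu_{\mathbf u}w\|$ is at least $\sup_B$ of the absolute value of each displayed form.
\item If $w_I=0$ for every $I$ with $0\notin I$, some $w_{0J}$ is a nonzero integer and the bound is $1$.
\item Otherwise, take $w_I\ne0$ with $0\notin I$, any $i\in I$, and $J=I\setminus\{i\}$. That form has the integer coefficient $\pm w_I$, of modulus at least $1$, on $f_i$.
\item Nondegeneracy at a point of $B$ makes $1,f_1,\ldots,f_k$ linearly independent on $B$. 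Hence $(c_0,\mathbf c)\mapsto\sup_B\abs{c_0+\mathbf c\cdot\mathbf f}$ is a norm on $\R^{k+1}$, and so it is at least $\kappa\|\mathbf c\|\ge\kappa$ for some $\kappa>0$ depending only on $B$ and $\mathbf f$.
\end{itemize}
Thus $\rho=\min(1,\kappa)$ works for all $t$ and all $\Gamma$.

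The skew-gradient belongs to the Bernik--Kleinbock--Margulis Khintchine-type convergence theorem. There $\nabla\mathbf f$ is built into the unipotent matrix, and wedge powers do produce terms such as $f_i\nabla f_j-f_j\nabla f_i$; with your $u_{\mathbf u}$ there is nothing for it to act on.

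One further small fix: cover the full-measure set of nondegenerate points, not all of $U$, by countably many such balls. With these corrections your plan is a correct proof, modulo the quantitative nondivergence theorem, which is the genuinely deep input.
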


We can now prove \cref{theorem:space-lb-iid}, which is restated below.

\begin{nicebox}
\getkeytheorem{theorem:space-lb-iid}
\end{nicebox}

\begin{proof}
Apply \cref{theorem:diophantine-approximation-on-manifolds} with
\begin{align}
U &= \set*{(p_1,\ldots,p_{k-1}) \mid p_i > 0, \sum_{i=1}^{k-1} p_i < 1}; \\
\mathbf{f}(p_1,\ldots,p_{k-1}) &= \left(\log(p_1),\ldots,\log(p_{k-1}), \log\!\left(1-\sum_{i=1}^{k-1} p_i\right)\right).
\end{align}
At any point, the first-order partial derivatives are
\begin{equation}
\frac{\partial}{\partial p_j} \mathbf{f}(p_1,\ldots,p_{k-1}) = \frac{1}{\ln(2)} \left(\ldots,p_j^{-1},\ldots,-\left(1-\sum_{i=1}^{k-1} p_i\right)^{-1}\right),
\end{equation}
where only the $j$th and $k$th coordinates are nonzero.
Furthermore, the second-order partial derivative with respect to $p_1$ is
\begin{equation}
\frac{\partial^2}{\partial p_1^2} \mathbf{f}(p_1,\ldots,p_{k-1}) = \frac{1}{\ln(2)} \left(-p_1^{-2},0,\ldots,0,-\left(1-\sum_{i=1}^{k-1} p_i\right)^{-2}\right),
\end{equation}
which, together with the $k-1$ first-order partial derivatives,
forms a linearly independent set and therefore spans $\R^k$.
Therefore, the theorem applies to this manifold and yields
\begin{equation}
\nu(1,\log(p_1),\ldots,\log(p_k)) \leq k
\end{equation}
for almost all finite discrete distributions $(p_1,\ldots,p_k)$.

Finally, \cref{theorem:space-linear-independence} with
$(\alpha_1,\ldots,\alpha_m)=(\log(p_1),\ldots,\log(p_k))$ yields
\begin{equation}
\abs{\AState} \geq (1/\eps)^{1/(1+k)-o(1)}
\end{equation}
for almost all distributions, and taking the logarithm gives the desired result.
\end{proof}

\section{Related Work}
\label{sec:related-work}

\paragraph{Entropy-Efficient Sampling}

Previous work on random sampling has mainly focused on either entropy efficiency
or computational complexity, with algorithms matching lower bounds on both fronts.
\Citet{knuth1976} derive a DDG tree algorithm that uses the exact minimum
possible number of input random bits, whether sampling from a single
distribution or from a sequence of distributions;
however, this algorithm uses unbounded space and per-sample computation time
when sampling an infinite sequence \citep[Appendix A]{draper2026soda}.
Computationally optimal sampling algorithms depend on
the computational model and format of the target distribution;
of particular note is the \citeauthor{walker1977} alias method \citep{walker1977},
which represents the target distribution in a linear-size data structure and
generates each sample using a constant number of word-level operations,
matching a trivial lower bound.
Although the original alias method has significant entropy loss,
the randomness recycling method of \citet[Algorithm 5.2]{draper2026soda}
reduces the entropy loss to $\eps$ bits per sample while
increasing the space by $O(\log(1/\eps))$ bits and adding
a constant number of $O(\log(1/\eps))$-bit arithmetic operations per sample.
Therefore, our lower bound in \cref{theorem:space-lb}
shows that for rational distributions,
the space usage of randomness recycling~\citep{draper2026soda}
is optimal up to constant factors,
and the runtime is optimal, up to the cost of arithmetic on the state.

\paragraph{Expected Space and Irrational Probabilities}

Our lower bounds refer to the \textit{persistent} space $s \in \AState$
that is carried by a sampler across each round of online random sampling.
Sampling algorithms may also use \textit{temporary} space, which is
allocated and released within a single round.
While temporary space must be unbounded to handle irrational probabilities,
its expectation may be finite.
For this setting, \citet{dgh2026} study the problem of online sampling
given CDF oracles that produce dyadic rational approximations to
irrational probabilities, up to any precision.
They show that over $n$ rounds of online random sampling, the expected overall
space (i.e., maximum of temporary and persistent) must grow as
$\Omega(\log n)$~\citep[Theorem 7.3]{dgh2026}.
They also introduce an algorithm that matches both their lower
bound on expected space and our lower bound on persistent space, using
deterministically $\Theta(\log(1/\eps))$ bits of persistent space to
achieve an entropy loss bound of $\eps$ bits per sample.
Their algorithm \citep[Algorithm 2]{dgh2026} uses an iteration counter $t$
to scale a parameter $\Delta_t$ which upper-bounds $M$,
in addition to the uniform state $(Z,M)$, and thus its persistent space
is $(3+o(1))\log(1/\eps)$ bits.
However, a variant with constant $\Delta = \widetilde{\Theta}(1/\epsilon)$
requires only $(2+o(1))\log(1/\eps)$ bits of persistent space $(Z,M)$,
which is optimal up to a factor of $2$ by
\cref{theorem:space-lb}, \zcref[noname]{item:state-space-lower-bounds-2}.

\paragraph{Single-Variable Sampling}

Orthogonally to this article's focus on asymptotic per-sample entropy loss bounds,
several works have studied the non-asymptotic efficiency of
generating a single sample from a distribution.
For this case,
the \citeauthor{knuth1976} DDG tree algorithm has an entropy loss bounded by $2$.
Explicitly representing the entire DDG tree
as in \citet{roy2013} requires exponential space~\citep[Theorem 3.5]{saad2020popl},
if the target distribution has finite support and arbitrary rational probabilities.
This space cost can be bypassed by instead lazily generating the tree during traversal
\Parencites[Algorithm 1]{saad2025}[p.~384]{knuth1976} (at the cost of higher runtime
during sampling) or using rejection sampling~\citep{saad2020fldr,draper2026ieee}
(at the cost of slightly higher entropy cost).

\paragraph{Lower Bounds in Coding Theory}
There exist results in the information theory literature
that provide lower bounds on the arithmetic precision or delay
(similar to our space complexity) of entropy-efficient codes.
\Citet{reznik2007} shows that for arithmetic codes,
achieving a redundancy (similar to our entropy loss per sample) of $\eps$ requires
$\Omega(\log(1/\eps))$ bits of precision for representing the probabilities;
although their model is quite restricted,
similar techniques may show that online random samplers based on
discretizations of the interval method of \citet{han1997}
require $\Omega(\log(1/\eps))$ space.
\Citet{shayevitz2014} show that for almost all distributions,
achieving a redundancy of $\eps$ requires $\Omega(\log(1/\eps))$ delay;
although their lower bound is quite general,
and a delay of $\Omega(\log(1/\eps))$
would imply a space complexity of $\Omega(\log(1/\eps))$
if each symbol had to be explicitly stored over the entire window of the delay,
an efficient coder may compress the symbols at intermediate stages
to achieve a lower space complexity.
Random sampling and coding theory are closely related but subtly distinct,
and to our knowledge the results in our paper do not imply, nor are implied
by, the aforementioned results in coding theory.
It remains an interesting question to what extent lower bounds or proof
techniques can be adapted from one setting to the other.

\section{Open Questions}
\label{sec:open-questions}

\paragraph{Distribution Dependence}
It is well known that zero entropy loss can be achieved for any dyadic distribution
(for which each probability is in $\set{0}\cup\set{2^{-n} \mid n \in \N}$)
using no persistent state \citep{knuth1976}.
Our analysis of entropy loss purely in terms of the state information
content function does not rule out the possibility of a sampler that
achieves an entropy loss of $\eps = 0$ for particular non-dyadic
distributions using a finite state space.
However, we conjecture that such a sampler does not exist.

\paragraph{Constant Factors}
Another open question is what constant factor for the $\Omega(\log(1/\eps))$
space is necessary, and to what extent it depends on the target distribution set $\DAlphabet$.
\Citet{draper2026soda} show that $2$ is an upper bound on the constant factor
whenever $\DAlphabet$ consists of finitely many rational distributions over finite support,
and a variant on the method of \citet{dgh2026} achieves this same bound for arbitrary distributions.
This constant factor could potentially be reduced to $1$,
according to \cref{theorem:space-lb}, \zcref[noname]{item:state-space-lower-bounds-2}.
On the lower-bound side,
the state information content cannot take entirely arbitrary values,
because there must be some relationship between states
that allows the algorithm to generate correct samples.
For example, a state space of size $\abs{\AState}=2$
is no better than one of size $\abs{\AState}=1$,
as a consequence of the exactness condition in \cref{def:online-random-sampling}.
A more fine-grained analysis may therefore strengthen the
lower bounds we have presented.

\paragraph{Continuous Distributions}
An orthogonal research direction is sampling from continuous distributions,
iteratively outputting new bits in the binary expansion of the generated real number.
In particular, given a cumulative distribution function $F$ satisfying $F(0)=0$
and $F(1)=1$, the problem is to generate outputs $X_1,X_2,\ldots$ following the conditional distributions
\begin{equation}
(X_{n+1} \mid X_1,\ldots,X_n) \sim \mathrm{Bernoulli}\left(
\frac{F(2^{-n}+\sum_{i=1}^{n}X_i/2^i) - F(2^{-n-1}+\sum_{i=1}^{n}X_i/2^i)}
     {F(2^{-n}+\sum_{i=1}^{n}X_i/2^i) - F(\sum_{i=1}^{n}X_i/2^i)}
\right).
\end{equation}
\Citeauthor{knuth1976} study finite-state generators for sampling from
particular distributions, and pose several related questions about
which distributions can be sampled in this way and how large the state
space must be in order to come within $\eps$ of the entropy-optimal algorithm
\citep[p.~427]{knuth1976}.
Question `(v)', about which distributions have finite-state generators,
was resolved by \citet{kindler2004},
who showed that a smooth CDF has a finite-state generator
if and only if it is a polynomial with rational coefficients,
whose derivative has no irrational roots on $[0,1]$.
However, question `(iii)', about which distributions have entropy-optimal
finite-state generators; and question `(iv)', about how large of a state space is needed
in order to come within $\eps$ of the entropy-optimal algorithm for $F(x)=x^3$;
remain unanswered to the best of our knowledge.
Our lower bounds depend crucially on the fact that an online random sampler
must be able to produce correct samples from any of its target distributions
at any time step, whereas for continuous samplers the output bits
approach the $\mathrm{Bernoulli}(1/2)$ distribution, which allows
significant optimizations.
In fact, the $\eps$ mentioned by \citeauthor{knuth1976} is a uniform bound on
the entropy loss relative to the optimal algorithm at each iteration count,
rather than a per-iteration loss which accumulates as $\eps n$ in our problem setting.
Therefore, entropy-space lower bounds for continuous distributions remain
an open question.

\section*{Acknowledgments}
This material is based upon work supported by the National Science Foundation
under Award No.~2311983.
Any opinions, findings and conclusions or recommendations expressed in this
material are those of the authors and do not necessarily reflect the views
of the National Science Foundation.

\printbibliography

\appendix

\end{document}